\newcommand{\be}{\begin{equation}}
\newcommand{\ee}{\end{equation}}
\newcommand{\ba}{\begin{array}}
\newcommand{\ea}{\end{array}}
\newcommand{\bea}{\begin{eqnarray}}
\newcommand{\eea}{\end{eqnarray}}
\newcommand{\ra}{\rangle}
\newcommand{\la}{\langle}
\newcommand{\calE}{{\cal E }}
\newcommand{\calS}{{\cal S }}
\newcommand{\calZ}{{\cal Z }}
\newcommand{\FF}{\mathbb{F}}
\newcommand{\CC}{\mathbb{C}}
\newcommand{\RR}{\mathbb{R}}
\newcommand{\ket}[1]{|#1\rangle}
\newcommand{\bra}[1]{\langle #1|}
\newcommand{\braket}[2]{\langle #1|#2\rangle}
\newtheorem{dfn}{Definition}
\newtheorem{lemma}{Lemma}
\newtheorem{theorem}{Theorem}
\newtheorem{problem}{Problem}
\begin{document}
\title{How to simulate quantum measurement without computing marginals}
\author{Sergey Bravyi$^1$}
\author{David Gosset$^{2,3,4}$}
\author{Yinchen Liu$^{2,3}$}
\affiliation{$^1$ IBM Quantum, IBM T.J. Watson Research Center}
\affiliation{$^2$ Department of Combinatorics and Optimization, University of Waterloo}
\affiliation{$^3$ Institute for Quantum Computing, University of Waterloo}
\affiliation{$^4$ Perimeter Institute for Theoretical Physics, Waterloo}
\begin{abstract}
We describe and analyze algorithms for classically simulating measurement of an $n$-qubit quantum state $\psi$ in the standard basis, that is, sampling a bit string $x$
from the probability distribution $|\la x|\psi\ra|^2$. Our algorithms reduce the sampling task to
computing poly$(n)$ amplitudes of $n$-qubit states;
unlike previously known techniques they do not require computation of marginal probabilities.
First we consider the case where $|\psi\rangle=U|0^n\rangle$ is the output state of an $m$-gate quantum circuit $U$. We propose an exact sampling algorithm
which involves computing $O(m)$ amplitudes 
of $n$-qubit states generated by subcircuits of $U$
spanned by the first $t=1,2,\ldots,m$ gates. We show that our algorithm can significantly accelerate
quantum circuit simulations based on
tensor network contraction methods or low-rank stabilizer decompositions. 
As another striking consequence we obtain an efficient classical simulation algorithm for measurement-based quantum computation with the surface code resource state on any planar graph, generalizing a previous algorithm which was known to be efficient only under restrictive topological constraints on the ordering of single-qubit measurements. Second, we consider the case in which $\psi$ is the unique ground state of a local Hamiltonian with a spectral gap that is lower bounded by an inverse polynomial function of $n$. We prove that a simple Metropolis-Hastings Markov Chain mixes rapidly to the desired probability distribution provided that $\psi$ obeys a certain technical condition,  which we show is satisfied for all sign-problem free Hamiltonians. This gives a sampling algorithm which involves computing  $\mathrm{poly}(n)$ amplitudes of $\psi$. 
\end{abstract}

\maketitle

There is strong evidence that quantum circuits cannot be simulated efficiently using a classical computer. Likewise, physical properties of locally interacting quantum many-body systems are unlikely to be classically accessible in the general case. Nevertheless, classical simulation techniques are widely used in quantum computation and condensed matter physics. To some extent this is out of necessity, as a means to go beyond the limits of pen-and-paper calculation. But it is also facilitated by the fact that mathematicians, computer scientists, and physicists have identified certain remarkable quantum systems where efficient classical simulation is possible. These include the family of Clifford circuits (simulable using the stabilizer formalism \cite{gottesman1997stabilizer}), systems that are equivalent to noninteracting fermionic particles including matchgate circuits \cite{matchgates,terhal2002classical} and the 2D Ising model \cite{onsager,barahona1982computational,kasteleyn1961statistics}  (via fermionic linear optics \cite{flo}), gapped 1D quantum many-body systems \cite{1darea, landau2015polynomial} or shallow quantum circuits in a 1D geometry \cite{jozsa2006simulation} (tensor network methods \cite{perez2006matrix, tensornetworks, markov2008simulating}), and ferromagnetic spin systems \cite{jerrum1993polynomial, bravyi2015monte,bravyi2017polynomial} (Markov chain Monte Carlo methods). Such examples are rare and insightful; each provides a glimpse of a facet of the quantum-classical boundary and informs our understanding of hard-to-simulate quantum resources.
Perhaps more importantly, the above algorithmic techniques can often be extended to more general settings with an increased computational cost. For example, the classical simulation algorithms based on low-rank stabilizer decompositions \cite{bravyi2016trading, bravyi2016improved, bravyi2019simulation} have a runtime which scales exponentially only in the number of non-Clifford gates in a quantum circuit. Tensor-network based simulation methods for quantum circuits \cite{markov2008simulating} have a runtime which scales exponentially only in the treewidth of a graph which describes the connectivity of the circuit. A large body of recent work (see, e.g., \cite{pednault2019leveraging, huang2020classical, pan2020contracting, gray2021hyper, pan2021simulating, pan2021solving, closinggap})  has focused on optimizing practical implementations of tensor network methods for the benchmark task of sampling from the output distribution of random quantum circuits,  in response to the quantum experiment \cite{arute2019quantum}. We expect classical simulation will continue to be a key technique for validation and verification of near-term quantum devices, and in the study of quantum many-body systems.

\begin{figure*}
\begin{minipage}{0.45\linewidth}
\begin{algorithm}[H]
	\caption{Qubit-by-qubit sampling\label{qbqsampling}}
\hspace*{\algorithmicindent} \hspace{-87pt}\textbf{Input:}  An $n$-qubit quantum state $\psi$.\\
\hspace*{\algorithmicindent} \hspace{-53pt} \textbf{Output:} $x\in \{0,1\}^n$ with probability $|\langle x|\psi\rangle|^2$.
	\begin{algorithmic}[1]
\State{Sample $x_1\in \{0,1\}$ from the probability distribution $\pi_1(x_1)$.}
			\For{$j=2$ to $n$}
				\State{Sample $x_j\in \{0,1\}$ from the probability distribution $ \pi_j(x_{1}\ldots x_{j-1}x_j)/\pi_{j-1}(x_{1}\ldots x_{j-1})$.}
			      \EndFor
		\State{\textbf{return} $x=x_1x_2\ldots x_n$}
		\end{algorithmic}
\end{algorithm}
\end{minipage}
\hfill
\begin{minipage}{0.45\linewidth}
\begin{algorithm}[H]
	\caption{Gate-by-gate sampling\label{sampling}}
	\hspace*{\algorithmicindent}  \hspace{-20pt}\textbf{Input:}  An $n$-qubit quantum circuit $U=U_m\cdots U_2 U_1$.\\
 \hspace*{\algorithmicindent} \hspace{-40pt} \textbf{Output:} $x\in \{0,1\}^n$ with probability $|\langle x|U|0^n\rangle|^2$. 
	\begin{algorithmic}[1]
	\State{$x\gets 0^n$}
			\For{$t=1$ to $m$}
		\State{$A\gets \{1,2,\ldots,n\}\setminus \mathrm{supp}(U_{t})$ 
		}
			\State{$S\gets \{y\in \{0,1\}^n \, : \, y_A=x_A\}$}
		\State{Sample $x\in S$ from the probability distribution $P_t(x)/\sum_{y\in S} P_t(y)$}
			      \EndFor
		\State{\textbf{return} $x$}
		\end{algorithmic}
\end{algorithm}
\end{minipage}
\end{figure*}

In this work we provide new techniques for a fundamental and ubiquitous task: simulating measurement of a quantum state $\psi$ in the standard basis. Throughout we shall assume $\psi$ is a normalized $n$-qubit quantum state and so our goal is to sample from the output distribution $|\langle x|\psi\rangle|^2$, where $x\in \{0,1\}^n$.

 It is well known that this task can be performed given the ability to compute any marginal probability of the form 
\begin{equation}
\pi_j(y)\equiv \langle \psi|\big( |y\rangle\langle y|\otimes I_{n-j}\big)|\psi\rangle \qquad y\in \{0,1\}^j.
\label{eq:marg}
\end{equation}
The standard \textit{qubit-by-qubit} sampling algorithm uses the chain rule for conditional probabilities to 
simulate measurement of each qubit $j=1,2,\ldots, n$ in sequence. It samples each measurement outcome $x_j\in \{0,1\}$ for $j=1,2,\ldots, n$ from its conditional distribution given the values of all previously sampled bits. This qubit-by-qubit algorithm---
stated formally as Algorithm~\ref{qbqsampling} below---
is the usual way to reduce the task of weak simulation (our sampling task) to strong simulation (computing a given probability or marginal). It is applicable in a wide variety of contexts as it works for any quantum state $\psi$. It has been deployed in countless works.

The runtime of the qubit-by-qubit algorithm is determined by the cost of computing the marginal probabilities $\pi_1(x_1), \pi_2(x_1x_2),\ldots, \pi_n(x_1x_2\ldots x_n)$. In particular, the total runtime is at most $n$ times the maximum runtime of computing a marginal of the form Eq.~\eqref{eq:marg}. The latter runtime may vary widely depending on the method used, and whether or not the state $\psi$ has special structure that can be exploited. In the cases we consider in this work (see below), computing marginals is \#P-hard in the worst case and it is expected that any algorithm must scale exponentially with the number of qubits.

Here we describe alternatives to the qubit-by-qubit algorithm, for two important families of quantum states $\psi$: 
output states of polynomial-size quantum circuits and 
unique ground states of local Hamiltonians with inverse polynomial spectral gap. In other words we give alternative efficient reductions from weak to strong simulation for these families of states. 
Our reductions differ from the qubit-by-qubit algorithm in that they do not require computation of marginal probabilities. 
Instead, our 
algorithms make a polynomial number of calls to a subroutine that computes \textit{amplitudes} of $n$-qubit states. We describe settings in which our new reductions provide vast improvements in total runtime for the task of simulating measurement.

\section{Simulation of quantum circuits}

Consider the task of 
sampling a bit string from 
the output distribution of a quantum circuit $U$
with $m$ gates
such that each gate is a unitary operator acting non-trivially on at most $k$ qubits.
We show how to reduce the sampling task to the one of computing
amplitudes of subcircuits of $U$ spanned by the first $t$ gates where $t=1,2,\ldots,m$.
The total number of amplitudes that one needs
to compute is at most $m2^k$.

To fix notation, suppose $U=U_m\cdots U_2 U_1$ is a quantum circuit acting on $n$ qubits.
Each gate $U_i$ acts non-trivially on a subset of qubits
 $\mathrm{supp}(U_i)\subseteq [n]$ called the support of $U_i$. Here and below $[n]\equiv \{1,2,\ldots,n\}$.
Let 
\be
P_t(x)=|\la x|U_t\cdots U_2 U_1|0^n\ra|^2
\ee
be the output distribution generated by the first $t$ gates of $U$
and $P_0(x) = |\la x|0^n\ra|^2=\delta_{x,0^n}$. Given a subset of qubits $A\subseteq [n]$
and a bit string $x\in \{0,1\}^n$, let $x_A\in \{0,1\}^{|A|}$
be the restriction of $x$ onto $A$.

Consider
the \textit{gate-by-gate} sampling algorithm 
described above as Algorithm~\ref{sampling}.
We claim that this algorithm outputs a bit string $x$ sampled from the desired distribution $P_m(x)=|\la x|U|0^n\ra|^2$.
Indeed,  let $Q_t(x)$ be the  probability distribution of $x$ at the 
end of the $t$-th iteration of the {\bf for} loop. Let 
$Q_0(x)=P_0(x)=\delta_{x,0^n}$.
Suppose we have already proved that $Q_{t-1}(x)=P_{t-1}(x)$ for all $x$.
Consider the $t$-th iteration of the {\bf for} loop and let $x$ be the bit
string sampled at the previous iteration. 
Let $P_t(x_A):=\sum_{y\, : \, y_A=x_A} P_t(y)$ be the marginal probability of
$x_A$ with respect to $P_t$.
Note that $\sum_{y\in S} P_t(y) = P_t(x_A)$. Thus
\begin{align*}
Q_{t}(y) &= \sum_{x\, : \, x_A = y_A} Q_{t-1}(x) \frac{P_{t}(y)}{P_{t}(y_A)}
=\sum_{x\, : \, x_A = y_A} P_{t-1}(x) \frac{P_{t}(y)}{P_{t}(y_A)}\\
&=\frac{P_{t-1}(y_A) P_{t}(y)}{P_{t}(y_A)}
= P_{t}(y).
\end{align*}
To get the last equality note that $U_{t}$ acts trivially 
on  $A$ which implies $P_{t-1}(y_A)=P_t(y_A)$ since $U_{t}$ is unitary.
Thus $Q_t(x)=P_t(x)$ for all $t$ and $x$.

To execute
line~5 one needs to compute $P_t(y)$ for each $y\in S$.
Since $|S|\le 2^k$, overall one needs to compute at most $m2^k$ output probabilities
$P_t(y)$ with $t=1,\ldots,m$ 
In the special case of CNOT+SU(2) circuits one needs to use lines~3-5 only 
if $U_t$ is a single-qubit gate. If $U_t$ is a CNOT,
replace lines~3-5 by  $|x\ra\gets U_t|x\ra$.
Then effectively $k=1$ and one needs to compute at most $2m$ output probabilities.
Likewise, if $U_t$ is a diagonal gate such as a  $Z$-rotation or CZ, one can skip
the $t$-th iteration of the $\bf for$ loop since $P_t(x)=P_{t-1}(x)$.

We note that Algorithm~\ref{sampling} can be applied almost verbatim to 
the task of sampling the output distribution of  an {\em adaptive} quantum circuit
which includes intermediate measurements such that each gate
may be classically controlled by outcomes of all previous measurements \footnote{
Since Algorithm~\ref{sampling} measures every qubit 
after applying each gate, no modification are needed to simulate
intermediate measurements.
Let us agree that 
once a qubit has been measured, all subsequent gates act trivially on this qubit. 
Then the  dependence of a gate $U_t$ on the outcomes of the earlier measurements
can be modeled by allowing $U_t$ to be classically controlled by 
the bit string $x_A$, where the register $A$ is defined at line~4.
Otherwise the algorithm and its analysis remains unchanged.}.

We now discuss situations in which the gate-by-gate algorithm may be preferable to the qubit-by-qubit algorithm. 

Let $f(n,d)$ be the cost of computing an amplitude of an $n$-qubit circuit with depth $d$ using some strong simulation method, such as tensor network contraction~\cite{markov2008simulating}.
We would expect a marginal probability such as
$\langle 0^n| U^{\dagger} (|y\rangle \langle y|\otimes I )U |0^n\rangle$
to have a cost comparable to $f(n,2d)$, since in general our best upper bound on the depth of the operator appearing in the expectation value is $2d+1$. Thus we expect the gate-by-gate algorithm to have a significant advantage over the qubit-by-qubit algorithm whenever $f(n,2d)/f(n,d)$ is large.  It may be helpful to consider two extreme cases. If we use the Schr\"{o}dinger simulation method which stores the entire $n$-qubit state in memory as a complex vector of length $2^n$ and then applies gates using sparse matrix-vector multiplication,  then we have $f(n,d)=O(nd2^n)$ and the advantage is only a constant factor. On the other hand, if we use a simple method that only requires $\mathrm{poly}(n,d)$ memory---the Feynman sum-over-paths technique---then $f(n,d)$ scales exponentially in $d$ and the advantage is substantial. The best polynomial-space algorithm we are aware of has a runtime scaling as $f(n,d)=O(n\cdot (2d)^{n+1})$ \cite{aaronson2016complexity} and in this case the advantage of the gate-by-gate method is exponential in $n$. From these examples we expect the gate-by-gate algorithm to be advantageous in memory-limited classical simulations where the entire state-vector cannot be loaded into classical memory.

In practice, tensor-network simulators may use heuristic algorithms to optimize their space and memory usage. To test whether or not our method can provide an advantage when using such methods, we used CoTenGra \cite{gray2021hyper} to optimize and estimate the tensor-network contraction costs of sampling once from the output distribution of a $49$-qubit-depth-$16$ 2D quantum circuit using both algorithms. To impose memory constraints, we used CoTenGra's slicing feature to restrict the maximum size of intermediate tensors. From TABLE \ref{Table:cotengra}, we observe that the gate-by-gate algorithm incurs significantly less slicing overheads, agreeing with the intuitive arguments. It is important to note that the contraction costs are estimated without actually performing the contractions. For more details, see the Supplementary Material.

\begin{table}[t]
    \begin{tabular}{c|cccc}
\multicolumn{1}{}{} & \multicolumn{4}{c}{$\log_2$ of max intermediate tensor size}\\
& 29 & 31 & 33 & 35\\
\hline
$\log_2(F_G)$ & 58.4433 &     58.3197 &    58.1232 &    58.1339  \\
$\log_2(F_Q)$ & 75.4501 &     73.1768 &    71.0325 &    68.9512 \\
$F_Q/F_G$ & 131690 &  29677 &  7693 &  1804  \\
\end{tabular}
\caption{FLOP count comparison for memory-limited tensor network simulation of a 2D depth-$16$ circuit on a $7\times 7$ grid of qubits.  Here $F_G, F_Q$ are, respectively, the FLOP counts of the gate-by-gate and qubit-by-qubit algorithms.}
\label{Table:cotengra}
\end{table}

The gate-by-gate algorithm can also provide runtime improvements for simulation methods based on low-rank stabilizer decompositions. Recall that a stabilizer state of $n$ qubits is a state of the form $C|0^n\ra$, where $C$ is a Clifford circuit composed of CNOT, Hadamard, and $S=\mathrm{diag}(1,i)$ gates.
The (exact) stabilizer rank $\chi(\alpha)$ of a quantum state $\alpha$ is the minimum integer $r$ such that $\alpha$ can be expressed as a linear combination of $r$ stabilizer states with complex coefficients~\cite{bravyi2016trading}.
As a simple example, suppose 
$|\psi\ra=U|0^n\ra$, where
$U$ is a circuit of size $poly(n)$ composed of Clifford gates and at most $\ell$ single-qubit gates $T=\mathrm{diag}(1,e^{i\pi/4})$. In this case 
it was shown~\cite{bravyi2016improved,qassim2021improved} that 
$\chi(\psi)\leq \chi(T^{\otimes \ell})
\le O(2^{0.3963\ell})$
where $|T\rangle\sim |0\rangle+e^{i\pi/4}|1\rangle$ is the single-qubit magic state \cite{bravyi2016improved}. 
It is known that any
amplitude of $n$-qubit stabilizer state
can be computed (including the overall phase)
in time $poly(n)$~\cite{bravyi2019simulation},
see also~
\footnote{More precisely, one can compute any amplitude of an $n$-qubit stabilizer state $\phi$ in time $O(n^2)$ provided that $\phi$ is specified by the so-called CH-form~\cite{bravyi2019simulation}. Computing the CH-form of $\phi$ starting from a Clifford circuit $C$ that prepares it, starting from the all-zeros computational basis state,  takes time $O(cn^2)$ where $c$ is the number of gates in $C$. }.
Since $\psi$ is a linear combination of $\chi(\psi)$ stabilizer states, any amplitude of $\psi$ can be computed
in time $poly(n)\chi(\psi)$. It follows that 
the gate-by-gate algorithm can sample the output distribution of $U$ 
in time $poly(n) \chi(T^{\otimes \ell})$.
The previous best known algorithm for this task 
based on the qubit-by-qubit simulation strategy
had a runtime that scales quadratically with $\chi(T^{\otimes \ell})$ \cite{bravyi2016trading}. There is strong evidence that this quantity increases exponentially with $\ell$ \cite{bravyi2016trading, huang2020explicit}, in which case we improve the \textit{exponent} of the runtime for the task of exact sampling. The fastest sampling algorithms based on stabilizer-rank methods, such as the sum-over-Cliffords method \cite{bravyi2019simulation}, allow for some small error in total variation distance from the true output distribution. In the Supplemental Material we show how the gate-by-gate algorithm can also be used to improve the runtime of such methods. While this is a more practical setting, the improvement is less dramatic as it only concerns polynomial prefactors in the runtime.

Our final example involves
a measurement-based quantum computation (MBQC)~\cite{raussendorf2003measurement}.
Recall that MBQC with an $n$-qubit resource state $\phi$
involves a sequence of $n$ single-qubit measurements
performed on a state 
\[
|\psi\ra = (U_1\otimes U_2 \otimes \cdots \otimes U_n)|\phi\ra,
\]
where $U_j$ are arbitrary single-qubit unitary operators.
Each unitary $U_j$ may depend on the outcomes of all previous measurements, according to some efficiently computable rule. 
For example, measurement of $\phi$ in the Fourier basis defined
by the Quantum Fourier Transform can be implemented by MBQC with the resource state $\phi$~\cite{griffiths1996semiclassical}.
MBQC is equivalent to the standard circuit-based quantum computation if one chooses $\phi$ as the 2D cluster state~\cite{raussendorf2003measurement}.
Here we choose $\phi$ as the Kitaev's surface code state~\cite{bravyi1998quantum} 
on a planar graph $G$, e.g. the 2D square lattice. 
It is known~\cite{bravyi2007measurement} that any amplitude of $\psi$ can be 
computed in time $O(n^3)$ by expressing it as the partition
function of the Ising model on the dual graph $G^*$
and using the seminal result by Barahona~\cite{barahona1982computational}.
This implies that the gate-by-gate
algorithm can efficiently simulate 
MBQC with the surface code state on any planar graph
for any temporal order of measurements. 
To the best of our knowledge, this is the first efficient classical algorithm for this task. 
A previous method~\cite{bravyi2007measurement}, based on the qubit-by-qubit sampling paradigm, provides an efficient simulation of such MBQC  only under certain restrictive topological constraints on the temporal order of measurements~\footnote{This algorithm can efficiently simulate any MBQC with the surface code state
such that the subset of qubits measured at every time step
spans a connected subgraph of $G$. The same should hold for the subset of unmeasured qubits.}. 
Moreover, in the Supplemental Material  we prove that computing certain marginal probabilities of $\psi$
required for the qubit-by-qubit algorithm is a $\#P$-hard problem. This suggests that this algorithm is incapable of efficiently simulating MBQC with the surface code state
for an arbitrary order of measurements.

 So far we have assumed that the output probabilities $P_t(x)$ can be computed exactly.
However, numerical simulations are exact only to within machine precision. 
Furthermore, some algorithms for  simulation of quantum circuits~\cite{bravyi2019simulation, vidal2003efficient,pan2020contracting}  are only meant to \textit{approximate} output probabilities. 
This leads to the question of whether Algorithm~\ref{sampling} is robust against errors in approximating the probabilities $P_t(x)$. Suppose that a subroutine is available for exactly computing amplitudes
of some $n$-qubit states $|\phi_t\ra$ such that  $\| |\phi_t\rangle - U_t\ldots U_2U_1|0^n\rangle\| \le \epsilon_t$ for all $t=1,2,\ldots,m$. Define probability distributions
\[
R_t(x) =  |\la x|U_t|\phi_{t-1}\ra|^2\|\phi_{t-1}\|^{-2},
\]
where $t=1,2,\ldots,m$.
Consider a modified version of Algorithm \ref{sampling} in which we replace $P_t$ by $R_t$ in line 5. In the Supplemental Material we prove the following lemma.
\begin{lemma}[\bf Robustness to errors]
\label{lemma:robustness}
Let $Q$ be the probability distribution describing the output of a modified version of Algorithm \ref{sampling} in which the approximation $R_t$ is used in place of $P_t$ in line 5. Then 
\be
\|Q - P_m\|_1:=\sum_{x\in \{0,1\}^n} |Q(x) - P_m(x)| \le 16\sum_{t=1}^{m-1} \epsilon_t.
\ee
\end{lemma}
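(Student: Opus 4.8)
The plan is to track how the error introduced at each iteration propagates through the remaining iterations of the loop. Let $Q_t$ denote the distribution of the bit string $x$ at the end of iteration $t$ in the modified algorithm, so $Q_m = Q$; and recall from the exact analysis that if we had used $P_t$ everywhere we would have $Q_t = P_t$ for all $t$. The key observation is that a single iteration of the loop, viewed as a stochastic map on distributions over $\{0,1\}^n$, is a contraction in $\ell_1$ distance: iteration $t$ takes an input distribution, restricts attention to the coset $S$ determined by $x_A$, and re-randomizes the bits in $\mathrm{supp}(U_t)$ according to the conditional distribution derived from $R_t$ (or $P_t$). Because this is an application of a fixed stochastic matrix (for fixed target distribution), it cannot increase $\ell_1$ distance between two input distributions. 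So the only thing that matters is (i) how much error each iteration injects relative to the ``ideal'' iteration that uses $P_t$, and (ii) the fact that later iterations do not amplify earlier errors.

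First I would prove a single-step bound: if $\widetilde{Q}_{t-1}$ is any distribution and we apply the exact iteration (using $P_t$) to get $\widehat{Q}_t$, versus applying the approximate iteration (using $R_t$) to get $\widetilde{Q}_t$, then $\|\widetilde{Q}_t - \widehat{Q}_t\|_1$ is bounded by something like a constant times the distance between the conditional kernels built from $P_t$ and from $R_t$, which in turn is controlled by $\||\phi_{t-1}\rangle - U_{t-1}\cdots U_1|0^n\rangle\| \le \epsilon_{t-1}$. The cleanest route to the latter is: the distribution $R_t(\cdot)$ is the Born distribution of the normalized state $U_t|\phi_{t-1}\rangle/\|\phi_{t-1}\|$, and $P_t(\cdot)$ is the Born distribution of $U_t\cdots U_1|0^n\rangle$; since these two states are $O(\epsilon_{t-1})$ close in Euclidean norm (using that $U_t$ is an isometry and that normalizing a state that is $\epsilon$-close to a unit vector moves it by $O(\epsilon)$), their output distributions are $O(\epsilon_{t-1})$ close in $\ell_1$ by the standard inequality $\||\a\rangle\langle\a| - |\b\rangle\langle\b|\|_1 \le 2\||\a\rangle - |\b\rangle\|$ applied to the diagonal. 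One then needs that the \emph{conditional} distributions (dividing by the marginal on $S$) inherit a comparable closeness; this is where a little care is needed, since dividing by a small marginal could blow things up, but summing over all cosets $S$ the total $\ell_1$ error of the conditional kernel is again $O(\epsilon_{t-1})$, which is the content I would isolate as a sub-lemma. The factor $16$ in the statement is just the accumulated slack from these $O(1)$ constants (roughly: a factor $2$ from trace-norm-vs-vector-norm, a factor $2$ from normalization, a factor $2$ from conditioning, doubled again because each step's error enters both the re-randomization kernel and the marginal weights, etc.), so I would not try to optimize it.

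Second I would assemble the telescoping/triangle-inequality argument. Write $Q = Q_m$ and $P_m = P_m$, and insert hybrid distributions $H_s$ defined as: run the approximate algorithm for the first $s$ iterations, then run the exact algorithm (with $P_{s+1},\ldots,P_m$) for the remaining iterations. Then $H_0 = P_m$, $H_m = Q$, and by the single-step bound plus the non-expansiveness of the exact iterations $t = s+2,\ldots,m$ applied to $H_s$ and $H_{s+1}$'s common ancestor, $\|H_s - H_{s+1}\|_1 \le c\,\epsilon_s$ for a universal constant $c$. Here the crucial point is that once the approximate step $s+1$ has been taken, all subsequent steps are applied identically (they use $P_{s+2},\ldots,P_m$ in both hybrids), and each such step is a stochastic map, hence $\ell_1$-contracting, so the discrepancy created at step $s+1$ is not amplified. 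Summing $\|Q - P_m\|_1 \le \sum_{s=1}^{m-1}\|H_s - H_{s+1}\|_1 \le c\sum_{s=1}^{m-1}\epsilon_s$, and tracking the constants gives $c = 16$ (note the sum runs to $m-1$ because $\epsilon_m$ never enters: the last gate's error only affects $R_m$, but actually $\phi_{m-1}$ is the last approximate state used, so only $\epsilon_1,\ldots,\epsilon_{m-1}$ appear).

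The main obstacle I anticipate is the conditioning step: controlling $\|R_t(\cdot\mid x_A) - P_t(\cdot\mid x_A)\|_1$ uniformly is impossible (the marginal $P_t(x_A)$ can be exponentially small), so one must instead bound the \emph{averaged} quantity $\sum_{x_A} Q_{t-1}(x_A)\,\|R_t(\cdot\mid x_A) - P_t(\cdot\mid x_A)\|_1$ — i.e., the error must be weighted by the probability of actually landing in coset $x_A$ — and show this equals $O(\epsilon_{t-1})$ by relating it back to the unconditioned $\ell_1$ distance between the two full Born distributions. Making this weighting bookkeeping precise, while keeping the non-expansiveness of subsequent steps intact, is the technical heart of the argument; everything else is triangle inequalities and standard state-distance estimates.
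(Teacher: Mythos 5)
You have the right ingredients (stochastic maps are $\ell_1$-contractions; Born distributions of nearby states are close in $\ell_1$; telescope over iterations), but the single-step bound you isolate as ``the technical heart'' is exactly where the argument breaks, and the fix requires reorganizing the triangle inequality rather than just bookkeeping. The quantity you propose to control, $\sum_{x_A} Q_{t-1}(x_A)\,\|P_t(\cdot\mid x_A)-R_t(\cdot\mid x_A)\|_1$, is \emph{not} $O(\epsilon_{t-1})$ in general. Weighting by the probability of landing in coset $x_A$ only cancels the dangerous denominators if that weight equals the marginal $P_t(x_A)=P_{t-1}(x_A)$ appearing in the conditional; with the weight $Q_{t-1}(x_A)$ the best you can say is that the sum is at most $2\|P_t-R_t\|_1+2\|Q_{t-1}-P_{t-1}\|_1$, and the second term is genuinely attainable (e.g.\ $P_t$ and $R_t$ each put mass $\epsilon$ on different strings of the same coset, so the conditionals are at $\ell_1$-distance $2$ while the unconditioned distance is only $2\epsilon$; if $Q_{t-1}$ overweights that coset the averaged conditional error is of order $\|Q_{t-1}-P_{t-1}\|_1$, not $\epsilon$). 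Feeding this into your hybrid recursion gives $\delta_t\le 3\delta_{t-1}+O(\epsilon_{t-1})$, which grows geometrically in $m$ rather than giving the claimed linear bound. The same defect afflicts your hybrid ordering: between $H_s$ and $H_{s+1}$ the two kernels are compared on the common input $Q_s$, which is an output of the \emph{approximate} algorithm, so its $A$-marginal need not match $P_{s+1}(x_A)$.

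The repair is to apply the triangle inequality the other way around: let the \emph{approximate} kernel absorb the difference of inputs (contraction gives $\|M_t Q_{t-1}-M_t P_{t-1}\|_1\le\|Q_{t-1}-P_{t-1}\|_1$), and compare the two kernels only on the exact input $P_{t-1}$, whose $A$-marginal equals $P_t(x_A)$; then the conditional sum collapses to $\le 2\|P_t-R_t\|_1\le 16\epsilon_{t-1}$ and the recursion closes as $\delta_t\le\delta_{t-1}+16\epsilon_{t-1}$. (Equivalently: reverse your hybrids, putting the exact prefix first.) The paper's proof is a slightly slicker variant that never touches conditional kernels at all: it observes that the approximate iteration's stochastic matrix $M_t$ satisfies $R_t=M_t\Phi_{t-1}$, where $\Phi_{t-1}$ is the Born distribution of $\phi_{t-1}/\|\phi_{t-1}\|$ (by the same unitarity argument as in the exact analysis), and then bounds $\|P_t-Q_t\|_1\le\|P_t-R_t\|_1+\|M_t(\Phi_{t-1}-Q_{t-1})\|_1\le\|P_t-R_t\|_1+\|\Phi_{t-1}-P_{t-1}\|_1+\|P_{t-1}-Q_{t-1}\|_1$, so that only unconditioned Born distributions of states within $2\epsilon_{t-1}$ of each other are ever compared; the constant $16$ is two applications of $\|\Psi-\Phi\|_1\le 4\|\psi-\phi\|$. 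Either route closes the gap; as written, your proposal does not.
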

 
\section{Simulation of ground states}
Suppose $\psi$ is the unique ground state
of a Hamiltonian $H$ describing a system of spins or fermions
 with few-body interactions.
Applying such Hamiltonian $H$ to any basis vector can flip only $O(1)$ bits~\footnote{Here we assume that the 
fermionic Hamiltonians are mapped to qubits using the second quantization method
and the Jordan Wigner transformation. Fock basis vectors are identified with $n$-bit strings, where $n$ is the number of fermionic modes.}.
More formally, let $d(x,y)$ be the Hamming distance between bit strings $x,y\in \{0,1\}^n$.
We require that 
\be
\label{H1}
\la x|H|y\ra=0 \quad \mbox{unless $d(x,y)\le k$}
\ee
for some fixed locality parameter $k=O(1)$. 
Equivalently, the expansion of $H$ in the Pauli basis
can only include products of single-qubit Pauli operators $X,Y,Z$
with at most $k$ factors $X$ and $Y$.
Let $\gamma>0$ be the spectral gap of $H$ separating the ground energy from the rest of the spectrum. 

As before, our goal is to sample a bit string $x\in \{0,1\}^n$ from 
the  distribution $\pi(x)=|\la x|\psi\ra|^2$
given  a subroutine for computing amplitudes of $\psi$.
More precisely, we shall only need a subroutine for computing
the ratio $\pi(y)/\pi(x)$ for given strings $x,y$.
The sampling algorithm takes as input an initial string $x_{in}$ such that $\pi(x_{in})$ is non-negligible
and outputs a sample from a distribution $\epsilon$-close to $\pi$ in the total variation distance.
The number of calls to the amplitude computation subroutine
scales as 
\begin{equation}
\label{sampling_runtime}
T \sim \frac{ n^k s}{\gamma} \log{\left( \frac1{\pi(x_{in}) \epsilon} \right)},
\end{equation}
where 
$s$ is a sensitivity parameter 
quantifying how
much the amplitude of $\psi$ can 
change upon flipping a few bits of $x$.
More formally, 
\begin{equation}
\label{sensitivity}
s= \max_{x\ne y} \; \frac{|\la y|H|x\ra\la x|\psi\ra |}{|\la y|\psi\ra|},
\end{equation}
where $x,y\in \{0,1\}^n$ and the maximization only includes strings $y$ such that $\la y|\psi\ra\ne 0$.
We can prove a general upper bound on $s$ only 
when $H$ is a sign-problem-free Hamiltonian,
a.k.a. stoquastic~\cite{bravyi2006complexity}.
Such Hamiltonians are defined by the property that 
all off-diagonal matrix elements of $H$ in the standard basis
are real and non-positive.
In the Supplementary Material we prove that
$s\le \max_x \la x|H|x\ra - E_0$ for any stoquastic Hamiltonian $H$
with the ground energy $E_0$. We leave as an open question whether the runtime dependence on $s$
can be avoided.

Our sampling algorithm  is the standard 
Metropolis-Hastings Markov Chain Monte Carlo method.
This method is often used in practice for  simulating measurement of approximations to quantum ground states, e.g. those based on neural networks \cite{carleo2017solving}.
It is often used as a heuristic 
 even if rigorous
bounds on the mixing time of the Markov chain are unavailable. 
Here we  prove that the  Metropolis-Hastings  Markov chain is rapidly mixing 
whenever the inverse spectral gap
$1/\gamma$ and the sensitivity parameter $s$ scale at most
polynomially with $n$.
In the Supplementary Material we describe a family of 
Hamiltonians $H$ for which the required amplitude computation subroutine can be implemented efficiently
and the sensitivity parameter obeys $s\le poly(n)$. This family includes some non-stoquastic Hamiltonians.

Before proceeding, we note that the sampling task considered here can be viewed as a generalization of the quantum circuit sampling task from the previous section. Indeed, one obtains an alternative algorithm for the latter by combining the method of this Section with the Feynman-Kitaev circuit-to-Hamiltonian mapping \cite{kitaev2002classical}.  The resulting algorithm for quantum circuit sampling has similar features
but is arguably less elegant and has less favorable runtime than the gate-by-gate sampling algorithm.
We also note that our algorithm is different from the Quantum Monte Carlo method
which applies only to sign-problem-free Hamiltonians $H$.
In contrast, our sampling algorithm can be applied to some Hamiltonians
with the sign problem, see the Supplementary Material for details.

Let $\calS\subseteq \{0,1\}^n$ be the support of $\pi$ such that $x\in \calS$ iff $\pi(x)>0$.
Define a Metropolis-Hastings type  Markov chain  with the state space $\calS$ 
such that the probability to transition from $x\in \calS$ to $y\in \calS\setminus \{x\}$ in one step is 
\begin{equation}
\label{Pxy}
P_{xy}=\frac12 Q_{xy} \cdot \min\{1, \pi(y)/\pi(x)\},
\end{equation}
where $Q$ is a symmetric proposal distribution that, given $x\in \calS$, selects a new binary string $y$ by choosing a uniformly random subset of bits of size at most $k$ and flipping them. More formally,
\begin{equation}
\label{proposal}
Q_{xy}=\begin{cases}\frac{1}{N} & \mbox{if $d(x,y)\leq k$}\\ 0 & \text{otherwise},\end{cases}
\end{equation}
where $N\equiv \sum_{j=0}^{k}\binom{n}{j}=O(n^k)$. 

One can directly check that this Markov chain is irreducible, aperiodic, and 
satisfies detailed balance with respect to the distribution $\pi$, i.e., $\pi(x)P_{xy}=\pi(y)P_{yx}$. 
Therefore $\pi$ is the unique limiting distribution of $P$. Let $\pi^{(t)}(x)$ be a distribution
obtained by performing $t$ steps of the Markov chain $P$ starting from a fixed state $x_{in}\in \calS$.
Using Proposition~3 of Ref.~\cite{diaconis1991geometric} one gets
\begin{equation}
\label{TVDbound}
\| \pi^{(t)} - \pi\|_1 \le \frac{\lambda_1^t}{2\sqrt{\pi(x_{in})}},
\end{equation}
where $\lambda_1\in [0,1)$ is the second largest eigenvalue of $P$
(here we noted that all eigenvalues of $P$ are non-negative since $P_{xx}\ge 1/2$).
Thus $\| \pi^{(t)} - \pi\|_1 \le \epsilon$ as long as $t\ge T$, where
\begin{equation}
\label{T}
T = \frac{\log{(2 \epsilon \sqrt{\pi(x_{in})})}}{\log{(\lambda_1)}}
\end{equation}
can be viewed as the mixing time.
A well-known variational characterization of the second
largest eigenvalue of a Markov chain~\cite{diaconis1991geometric} gives
\begin{equation}
\label{variational}
1-\lambda_1 = \inf_\phi \frac{\calE(\phi,\phi)}{\mathrm{Var}(\phi)}
\end{equation}
where the infimum is taken over all non-constant functions $\phi \, : \, \calS \to \RR$,
\[
\calE(\phi,\phi) = \frac12 \sum_{x,y\in \calS} \pi(x) P_{xy} (\phi(x)-\phi(y))^2
\]
is the so-called Dirichlet form, and $\mathrm{Var}(\phi)$ is the variance of $\phi(x)$
with respect to $\pi$. Let $\phi\, : \, \calS\to \RR$ be a function that 
achieves the infimum in Eq.~(\ref{variational}) such that 
\begin{equation}
\label{variational1}
(1-\lambda_1)  \mathrm{Var}(\phi) = \calE(\phi,\phi).
\end{equation}
Define an (unnormalized) $n$-qubit state 
\[
|\psi^\perp\ra = \sum_{x\in \calS} (\phi(x)-\mu) \psi(x)|x\ra,
\]
where $\psi(x)\equiv \la x|\psi\ra$ and $\mu=\sum_{x\in \calS} \pi(x) \phi(x)$ is the mean value
of $\phi(x)$. One can easily check that $\la \psi|\psi^\perp\ra=0$ and $\|\psi^\perp\|^2=\mathrm{Var}(\phi)$.
Let $E_0$ be the ground energy of $H$ and $H'=H-E_0I$.
Then $H'|\psi\ra=0$ and the second smallest eigenvalue of $H'$ is $\gamma$.
It follows that $\psi^\perp$ has energy at least $\gamma$ with respect to $H'$,  that is,
\begin{equation}
\label{variational1'}
\gamma \|\psi^\perp\|^2 \le \la \psi^\perp|H'|\psi^\perp\ra.
\end{equation}
Using the identity $H'|\psi\ra=0$
one gets
\begin{align*}
\la \psi^\perp|H'|\psi^\perp\ra &=
\sum_{x,y\in \calS} \psi^*(x) \psi(y) \phi(x) \phi(y) \la x|H'|y\ra \\
& = -\frac12\sum_{x,y\in \calS} \psi^*(x) \psi(y) (\phi(x)- \phi(y))^2 \la x|H'|y\ra, \\
& \le \frac12\sum_{x,y\in \calS}
(\phi(x)- \phi(y))^2 |
\psi(x) \psi(y) \la x|H|y\ra|.
\end{align*}
Here we noted that $\la y|H'|x\ra=\la y|H|x\ra$ for $x\ne y$.
From Eq.~(\ref{sensitivity}) one gets
\[
|\psi(x) \psi(y) \la x|H|y\ra|\le s \cdot \min{(\pi(x),\pi(y))}
\]
for any $x\ne y$.
Combining this bound and  Eqs.~(\ref{H1},\ref{Pxy},\ref{proposal})
one gets 
\[
|\psi(x) \psi(y) \la x|H|y\ra|\le 2Ns\pi(x) P_{xy}
\]
for any $x\ne y$.
We conclude that 
\begin{equation}
\label{variational2}
\gamma \|\psi^\perp\|^2 \le 2Ns \calE(\phi,\phi).
\end{equation}
Combining Eqs.~(\ref{variational1},\ref{variational2}) and the identity 
$\| \psi^\perp\|^2 = \mathrm{Var}(\phi)$ one arrives at 
$1-\lambda_1 \ge \frac{\gamma}{2Ns}$. Now
the runtime scaling claimed in Eq.~(\ref{sampling_runtime})
follows from Eq.~(\ref{T}).

We conclude by noting that similar arguments were used by Crosson and Bowen~\cite{crosson2017quantum}
to establish isoperimetric inequalities for probability distributions
associated with ground states of gapped local Hamiltonians. 
Such inequalities can be used to bound the conductance~\cite{levin2017markov} of the Markov chain 
considered above,
which provides an upper bound on its mixing time.
However we expect this bound to scale as $O(1/\gamma^2)$ which is quadratically worse
compared with Eq.~(\ref{sampling_runtime}).

\paragraph{Acknowledgments}
We thank Ramis Movassagh for helpful discussions and for the suggestion to use
CoTenGra library. We also thank Johnnie Gray for suggesting the use of the dynamic slicing option in CoTenGra. SB is supported in part by the IBM Research Frontiers Institute.
DG and YL acknowledge the support of the Natural Sciences and Engineering Research Council of Canada through grant number RGPIN-2019-04198. DG also acknowledges the support of the Canadian Institute for Advanced Research, and IBM Research. Research at Perimeter Institute is supported in part by the Government of Canada through the Department of Innovation, Science and Economic Development
Canada and by the Province of Ontario through the Ministry of Colleges and Universities.
\bibliographystyle{unsrt}
\bibliography{mybib}
\appendix
\onecolumngrid

\section{Robustness of the gate-by-gate algorithm}

\begin{proof}[Proof of Lemma \ref{lemma:robustness}]
Define states 
\[
|\psi_t\ra = U_t \cdots U_2 U_1|0^n\ra.
\]
By assumption, we have a subroutine for computing amplitudes
of states $\phi_t$ such that 
\be
\| \psi_t - \phi_t\| \le \epsilon_t
\ee
for all $t\ge 1$. We set $|\phi_0\ra=|\psi_0\ra=|0^n\ra$.
A simple algebra then gives
\be
\label{approx_eq1}
\| \psi_t - \frac{\phi_t}{\|\phi_t\|} \| \le 2\epsilon_t.
\ee
In this section we consider a modified version of the gate-by-gate algorithm that uses the probability distribution 
\[
R_t(x)=\frac{|\la x|U_t|\phi_{t-1}\ra|^2}{\|\phi_{t-1}\|^2}
\]
for $t\ge 1$
in place of $P_t$ in line 5.
Let $Q_t(x)$ be the  probability distribution of $x$ at the 
end of the $t$-th iteration of the {\bf for} loop of the modified algorithm.  It suffices to show that 
\be
\label{inductionPQ}
\| P_t - Q_t\|_1 \le \delta_t, \qquad \delta_t:=16\sum_{s=1}^{t-1} \epsilon_s 
\ee
for all $t=1,2,\ldots,m$. Here it is understood that $\delta_1=0$.
We shall use induction in $t$.
The base of induction is $t=1$.  In this case $|\phi_0\ra = |\psi_0\ra$ and the 
analysis performed in the main text shows that 
$Q_1=P_1$. Thus $\delta_1=0$ proving the base of induction.

Consider now the induction step. Define distributions
\[
\Phi_t(x) = \frac{|\la x|\phi_t\ra|^2}{\|\phi_t\|^2},
\]
where $t=1,2,\ldots,m$.
The $t$-th iteration of the {\bf for} loop can be described by a stochastic matrix $M_t$
of size $2^n\times 2^n$ such that $Q_t=M_t Q_{t-1}$. Here we consider probability
distributions as column vectors.
Repeating the same arguments as in the main text one gets
$R_t=M_t \Phi_{t-1}$. Thus 
\[
\| P_t - Q_t\|_1 \le \| P_t - R_t\|_1 + \| R_t - Q_t\|_1 =
 \| P_t - R_t\|_1 + \| M_t(\Phi_{t-1} - Q_{t-1})\|_1
\le  \| P_t - R_t\|_1 + \|\Phi_{t-1} - Q_{t-1}\|_1.
\]
Here we used the triangle inequality and
noted that the multiplication by a stochastic matrix
does not increase the total variation distance between distributions.
Applying the triangle inequality to the last term gives
\be
\label{M2}
\| P_t - Q_t\|_1 \le  \| P_t - R_t\|_1 +  \|\Phi_{t-1} - P_{t-1}\|_1 + 
\| P_{t-1} - Q_{t-1}\|_1.
\ee
By the induction hypothesis, $\|P_{t-1} - Q_{t-1}\|_1\le \delta_{t-1}$. Thus
\be
\label{M3}
\| P_t - Q_t\|_1 \le  \| P_t - R_t\|_1 +  \|\Phi_{t-1} - P_{t-1}\|_1 + 
\delta_{t-1}.
\ee
For any normalized quantum states $|\psi\ra$ and $|\phi\ra$
let $\Psi(x)=|\la x|\psi\ra|^2$ and $\Phi(x)=|\la x|\phi\ra|^2$ be the corresponding
distributions. We have 
\be
\label{L1L2norms}
\| \Psi - \Phi\|_1 = 2 \max_{S \subseteq \{0,1\}^n}\; |\Psi(S) - \Phi(S) |
= 2 \max_{S \subseteq \{0,1\}^n}\; | \la \psi |\Pi_S |\psi\ra - \la \phi|\Pi_S |\phi\ra|
\le 4\| \psi - \phi\|,
\ee 
where $\Pi_S :=\sum_{x\in S} |x\ra\la x|$. From Eqs.~(\ref{approx_eq1},\ref{L1L2norms}) one gets
\be
\label{M4}
\| \Phi_{t-1} - P_{t-1}\|_1 \le 4 \| \frac{\phi_{t-1}}{\|\phi_{t-1}\|} - \psi_{t-1}\| \le 8\epsilon_{t-1}.
\ee 
By definition, the distributions $P_t$ and $R_t$ are obtained by performing a measurement
on the states $|\psi_t\ra=U_{t}|\psi_{t-1}\ra$ and $U_t |\phi_{t-1}\ra/\|\phi_{t-1}\|$ respectively.
From Eqs.~(\ref{approx_eq1},\ref{L1L2norms}) one gets
\be
\label{M5}
\| P_t - R_t\|_1
\le 4\| U_t |\psi_t\ra - \frac{U_t |\phi_{t-1}\ra}{\| \phi_{t-1}\|} \|
=4\|  \psi_t  - \frac{\phi_{t-1}}{\| \phi_{t-1}\|} \|\le 8 \epsilon_{t-1}.
\ee
Combining Eqs.~(\ref{M3},\ref{M4},\ref{M5}) gives
\[
\| P_t - Q_t\|_1 \le \delta_{t-1} + 16\epsilon_{t-1}  = \delta_t
\]
completing  the induction step. 
\end{proof}

\subsection{Application to the sum-over-Cliffords simulator}
\label{sec:Clifford}

Suppose each state $|\psi_t\ra = U_t \cdots U_2U_1|0^n\ra$
can be approximated by a state 
\be
\label{stab1}
|\phi_t\ra = \sum_{\alpha=1}^{\chi_t} c_{t,\alpha} |\omega_{t,\alpha}\ra,
\ee
where $c_{t,\alpha}$ are complex coefficients and $|\omega_{t,\alpha}\ra$
are $n$-qubit stabilizer states.
Below we assume that 
\be
\label{stab2}
\| \psi_t - \phi_t\| \le \epsilon_t
\ee
for all $t$. The desired approximation can be computed
using the sum-over-Cliffords decomposition of Ref.~\cite{bravyi2019simulation}.
As was shown in that work, 
the number of terms in the sum Eq.~(\ref{stab1}) known 
as the stabilizer rank grows exponentially with the number
of non-Clifford gates among $U_1,U_2,\ldots,U_t$, namely, 
\be
\chi_t  = \frac1{\epsilon_t^2} \prod_{s=1}^t \xi(U_s),
\ee
where $\xi(U_t)\ge 1$ with the equality iff $U_t$ is a Clifford gate.
For example, if $U_t$ is a  single-qubit $Z$-rotation $e^{-i(\theta/2)Z}$ with $\theta \in [0,\pi/2]$ then
\[
\xi(U_t) = \left( \cos{(\theta/2)} + \tan{(\pi/8)} \sin{(\theta/2)} \right)^2.
\]
For details, see Ref.~\cite{bravyi2019simulation}. Below we assume that 
\be
\label{stab3}
\sum_{t=1}^{m-1} \epsilon_t =\delta/16.
\ee
Then Lemma~\ref{lemma:robustness} implies that the
modified version of Algorithm~\ref{sampling}
samples the output distribution of $U$ within a statistical error $\delta$.
Following Ref.~\cite{bravyi2019simulation}, we shall
assume that each stabilizer state in the decomposition Eq.~(\ref{stab1}) is specified by
the so-called  CH-form~\cite{bravyi2019simulation}
which is a data structure for describing stabilizer states
including the overall phase. 
As shown in~\cite{bravyi2019simulation},
any amplitude of an $n$-qubit stabilizer state
specified by the CH-form can be computed in time $O(n^2)$. It follows that any amplitude
of the state $|\phi_t\ra$ can be computed in time $O(\chi_t n^2)$.
Thus the total cost (runtime) of the modified Algorithm~\ref{sampling} is at most 
\be
\label{stab4}
C = O(n^2) \sum_{t=1}^{m-1} \chi_t = O(n^2) \sum_{t=1}^{m-1} \frac{\xi(U_1)\cdots \xi(U_t)}{\epsilon_t^2}.
\ee
Minimizing the cost $C$ over the variables $\epsilon_1,\ldots,\epsilon_{m-1}\ge 0$ subject to the
constraint Eq.~(\ref{stab3}) gives
\be
\label{stab5} 
C = \frac{O(n^2)}{\delta^2} \left( \sum_{t=1}^{m-1} \prod_{s=1}^t [\xi(U_s)]^{1/3} \right)^3,
\ee
with the optimal choice of $\epsilon_t$ being 
\be
\epsilon_t = \frac{\delta}{16} \cdot \frac{\eta_t^{1/3}}{\sum_{s=1}^{m-1} \eta_s^{1/3}}, \qquad
\eta_t : = \prod_{s=1}^{t} \xi(U_s).
\ee
Since $\xi(U_t)\ge 1$ for all $t$ and $\xi(U_t)>1$ for non-Clifford gates, one should expect that the
sum over $t$ in Eq.~(\ref{stab5}) is dominated by the last few terms. Then 
\be
\label{stab6}
C = \frac{O(n^2)}{\delta^2} \prod_{t=1}^{m-1} \xi(U_t).
\ee
For comparison, sampling the output distribution of $U$ using the methods
of Ref.~\cite{bravyi2019simulation} would have cost (runtime) at least
$O(n^6 \chi_m)=O(n^6/\delta^2)\prod_{t=1}^{m} \xi(U_t)$.

Following  Ref.~\cite{bravyi2019simulation},
the above discussion ignores the cost of computing the CH-form
of stabilizer states in the decomposition Eq.~(\ref{stab1}).
This cost does not depend on the number of samples that one has to generate since the CH-form needs to be computed only once. As shown in~\cite{bravyi2019simulation}, 
one can compute the CH-form of an $n$-qubit stabilizer state specified by a Clifford circuit with $c$ gates
in time $O(cn^2)$.

\section{Quantum circuit simulation numerical details}
\begin{table}[h]
    \begin{tabular}{c|cccc}
\multicolumn{1}{}{} & \multicolumn{4}{c}{$\log_2$ of max intermediate tensor size}\\
& 29 & 31 & 33 & 35\\
\hline
$\log_2(F_G)$ & 58.4984 &     58.3358 &    58.0783 &    58.1070   \\
$\log_2(F_Q)$ & 75.3782 &     73.1171 &    71.1410 &    68.9029 \\
$F_Q/F_G$ & 120590 &  28160 &  8556 &  1778   \\
\end{tabular}
    \begin{tabular}{c|cccc}
\multicolumn{1}{}{} & \multicolumn{4}{c}{$\log_2$ of max intermediate tensor size}\\
& 29 & 31 & 33 & 35\\
\hline
$\log_2(F_G)$ & 58.3999 &     58.2307 &    58.1988 &    58.1414 \\
$\log_2(F_Q)$ & 75.5524 &     73.1137 &    71.0621 &    68.9284 \\
$F_Q/F_G$ & 145691 &  30217 &  7451 &  1767 \\
\end{tabular}
\caption{FLOP count comparison for memory-limited tensor network simulation of a 2D depth-$16$ circuit on a $7\times 7$ grid of qubits.  Here $F_G, F_Q$ are, respectively, the FLOP counts of the gate-by-gate and qubit-by-qubit algorithms; the variability is present but small and the FLOP counts are stable across independent runs with identical parameter settings.}
\label{Table:cotengra_multiple_runs}
\end{table}

In this section, we describe the setup used to obtain \autoref{Table:cotengra} as well as \autoref{Table:cotengra_multiple_runs}. Experiments with both the qubit-by-qubit and gate-by-gate algorithms are facilitated through quimb \cite{gray2018quimb} and CoTenGra \cite{gray2021hyper}. We use CoTenGra's \verb|cotengra.ReusableHyperOptimizer| to perform contraction tree optimization for both algorithms using identical settings \verb|minimize=`flops'|, \verb|methods=[`kahypar']|, and \verb|max_repeats=512|. The maximum intermediate tensor size constraints are enforced by setting \verb|slicing_reconf_opts={`target_size':2**s}| for $s=29, 31, 33, 35$. The $49$-qubit-depth-$16$ circuit consists of a $7\times 7$ grid of qubits with alternating horizontal and vertical Haar random $2$-qubit gate patterns repeating every four layers. Although the numerics is performed on a random quantum circuit, the results should be representative for arbitrary circuits with this architecture since we do not expect the tensor network optimizer to take advantage of the random gate entries. Due to the internal randomness in CoTenGra's optimization algorithms, the estimated FLOP counts vary slightly between runs. In \autoref{Table:cotengra_multiple_runs}, we report the results from two other runs using the settings described above. The inverse relationship between $F_Q/F_G$ and the maximum intermediate tensor size is consistently observed. We mention that with sixty CPU cores, it takes about three days to execute one optimization run involving both the gate-by-gate and qubit-by-qubit algorithms. Adopting normal slicing (by setting \verb|slicing_opts|) as opposed to dynamic slicing (by setting \verb|slicing_reconf_opts|) reduces the optimization time from days to hours at the cost of significantly higher FLOP counts for both algorithms for all four choices of the \verb|target_size|.

To estimate the total contraction cost of the qubit-by-qubit algorithm, we adopt the implementation provided by quimb with the rehearse option (\verb|quimb.tensor.circuit.Circuit.sample_rehearse|). For an $n$-qubit circuit, the qubit-by-qubit algorithm needs to perform $n$ contractions, and the total cost $F_Q$ is computed by summing the cost of each contraction extracted from the optimizer \verb|opt| using \verb|opt.get_tree().contraction_cost()|.


For the gate-by-gate algorithm, we use \verb|quimb.tensor.circuit.Circuit.amplitude_rehearse| to estimate the cost of computing the amplitudes needed to execute line 5 of Algorithm~\ref{sampling}. The cost of each contraction is extracted from the optimizer \verb|opt| using \verb|opt.get_tree().contraction_cost()|. The sample $x\in S$ drawn at line 5 of Algorithm~\ref{sampling} is always assumed to be $0^n$ for an $n$-qubit circuit throughout the algorithm. The one-sample contraction cost $F_G$ is computed by summing the cost of computing the amplitudes in each iteration.

\section{Measurement-based computation with the surface code states}
\label{app:surface}

We begin by formally defining the surface code state
and the measurement-based quantum computation (MBQC). 
Let $G=(V,E)$ be a planar graph with $n$ edges.
We shall label edges by integers such that $E=\{1,2,\ldots,n\}$.
Suppose $x\subseteq E$ is a subset of edges. 
We  identify  $x$ with an $n$-bit string
which has '1' at the $j$-th position iff $x$ contains the $j$-th edge of $G$.
Let us say that $x$ is a {\em cycle} if
each vertex of $G$ has even number of incident edges from $x$.
Let $\calZ(G)$ be the set of all cycles in $G$.
Note that $\calZ(G)$ is a linear subspace of $\FF_2^n$.
One can choose a basis of $\calZ(G)$ such that each basis vector is the boundary of some face of $G$.
Place a qubit at every edge of $G$ and define an $n$-qubit state
$\psi_G$ which is the uniform superposition of all cycles,
\be
\label{surface_code}
|\psi_G\ra = \frac1{\sqrt{|\calZ(G)|}} \sum_{x\in \calZ(G)} \; |x\ra.
\ee
One can easily check that $|\psi_G\ra$ coincides with the Kitaev's surface code state~\cite{kitaev2003fault,bravyi1998quantum}
in the special case when $G$ is the 2D square lattice with open boundary conditions. 
Accordingly, we shall refer to  $\psi_G$ as the surface code state on the graph $G$.
By definition, 
measurement of $\psi_G$ returns a random uniformly distributed cycle $x\in \calZ(G)$.
Such cycle can be represented as $x=\sum_{j=1}^f r_j b_j {\pmod 2}$, where $f$ is the number of faces in the graph, $b_j\subseteq E$ is the boundary of the $j$-th face, and $r\in \{0,1\}^f$ is picked uniformly at random.
Since $f\le n$, one concludes that measurement of $\psi_G$ can be simulated in time $O(n)$.
Here we assumed that the specification of $G$ includes a list of all faces. 

From now on we consider MBQC with the resource state $\psi_G$. It is defined as a sequence of $n$ single-qubit measurements performed on a state 
\be
\label{surface_code_rotated}
|\psi\ra = (U_1 \otimes U_2 \otimes \cdots \otimes U_n)|\psi_G\ra,
\ee
where $U_j$ are arbitrary single-qubit unitary operators.
Qubits are measured sequentially in the order $1,2,\ldots,n$.
Let $x_i\in \{0,1\}$ be the measurement outcome on the $i$-th qubit.
The unitary $U_j$ can be chosen as a function of all previous measurement outcomes, that is, $U_j=U_j(x_1,x_2,\ldots,x_{j-1})$.
This function must be computable
in time $poly(n)$. As before, our goal is to sample a bit string $x\in \{0,1\}^n$ describing the outcomes of all $n$ measurements. 
Since the resource state $\psi_G$ is fixed throughout this section,
below we use the term MBQC without specifying the resource state.

Let us first discuss how to simulate MBQC using the  gate-by-gate algorithm.
As noted in the main text, this algorithm can be applied to any adaptive quantum circuit. The considered MBQC can be viewed as an adaptive quantum circuit 
$U=U_1\otimes \cdots \otimes U_n$ composed of $n$ single-qubit gates. However, since the initial state of this circuit is $\psi_G$ rather than $0^n$, two modifications of the algorithm are needed.
First, the initialization step $x\gets 0^n$ should be replaced by simulating measurement of $\psi_G$, that is, sampling $x$ from the distribution $|\la x|\psi_G\ra|^2$.
As noted above, this requires runtime $O(n)$.
Secondly, the initial state $0^n$ in the definition
of probabilities $P_t(x)$ should be replaced by the surface code state $\psi_G$.
Accordingly, we define
\be
\label{appPt}
P_t(x) = |\la x|U_1 \otimes U_2(x_1) \otimes \cdots \otimes U_t(x_1,\ldots,x_{t-1}) \otimes \underbrace{I \otimes \cdots \otimes I}_{n-t}|\psi_G\ra|^2,
\ee
where $x\in \{0,1\}^n$.
One can easily check that $P_t(x)$ is a normalized probability distribution.
Our goal is to sample $x$ from the final distribution $P_n(x)$.
The modified version of the gate-by-gate algorithm is
\begin{center}
\begin{algorithm}[H]
	\caption{Simulate MBQC with the surface code state $\psi_G$\label{samplingMBQC}}
	\begin{algorithmic}[1]
	\State{Sample $x$ from $P_0(x)=|\la x|\psi_G\ra|^2$}
			\For{$t=1$ to $n$}
			\State{$S\gets \{x,x\oplus e^t\}$}
		\State{Sample $x\in S$ from the probability distribution $P_t(x)/\sum_{y\in S} P_t(y)$}
			      \EndFor
		\State{\textbf{return} $x$}
		\end{algorithmic}
\end{algorithm}
\end{center}
Here $e^t$ is the $n$-bit string with  a single `$1$' at the $t$-th position. The analysis of the gate-by-gate algorithm given in the main text applies almost verbatim to Algorithm~\ref{samplingMBQC}.
Indeed, consider the $t$-th iteration of the $\mathbf{for}$ loop
and let $A=\{1,2,\ldots,n\}\setminus t$. 
We need to check that $P_{t-1}(x_A)=P_t(x_A)$ for all $x\in \{0,1\}^n$, where $x_A$ is the restriction of $x$ onto $A$.
This is equivalent to the identity
\[
\sum_{x_t=0,1} P_{t-1}(x) = \sum_{x_t=0,1} P_t(x).
\]
Since the bits $x_1,\ldots,x_{t-1}$ are fixed, the gate 
$U_t(x_1,\ldots,x_{t-1})$
can be considered as a regular (non-adaptive) single-qubit gate.
The above identity then follows from the unitarity of $U_t$.
The rest of the analysis presented in the main text is unchanged. 

To implement Algorithm~\ref{samplingMBQC}
it suffices to give an efficient subroutine for computing 
the probabilities $P_t(x)$.
Clearly, $P_t(x)$ coincides with the overlap between $\psi_G$ and a tensor product of single-qubit states
$|\phi_j\ra=U_j^\dag(x_1,\ldots,x_{j-1}) |x_j\ra$ or $|\phi_j\ra=|x_j\ra$.
This leads to the following problem.
\begin{problem}[\bf Surface Code Amplitude]
\label{problem:SCA}
Given a planar graph $G=(V,E)$ and a single-qubit state $|\phi_j\ra \in \CC^2$ for every edge $j\in E$.
Compute the overlap $|\la \Phi|\psi_G\ra|^2$, where
$|\Phi\ra = \bigotimes_{j\in E} |\phi_j\ra$.
\end{problem}
As shown in Ref.~\cite{bravyi2007measurement},
Problem~\ref{problem:SCA} can be solved in time $O(n^3)$.
The algorithm of Ref.~\cite{bravyi2007measurement} reduces Problem~\ref{problem:SCA}
to computing the partition function of the Ising model (possibly with complex Boltzmann weights)
defined on the dual graph $G^*$.  The latter is computed 
using the Pfaffian method which goes back to the
seminal works by Kasteleyn~\cite{kasteleyn1961statistics} and
Barahona~\cite{barahona1982computational}.
We conclude that Algorithm~\ref{samplingMBQC}
can simulate MBQC with the surface code state
for any planar graph
in time $O(n^4 T)$, where $T$ is the maximum runtime required
to compute any function $U_j(x_1,\ldots,x_{j-1})$.
The  runtime of Algorithm~\ref{samplingMBQC} can be slightly improved in the special case when $G$ is the 2D square lattice.
In this case Problem~\ref{problem:SCA} can be solved in time $O(n^2)$
using the Majorana fermion representation of the surface code~\cite{bravyi2018correcting}.

Applying the qubit-by-qubit algorithm to simulate 
the considered MBQC requires a subroutine
for computing marginal probabilities 
$\pi_t(y)
=\la \psi| ( |y\ra\la y| \otimes I_{n-t})|\psi\ra$
for all $t=1,2,\ldots,n$.
Let $M=\{1,2,\ldots,t\}$.
Clearly, $\pi_t(y)$ coincides with the overlap between the reduced density matrix
$\rho_M = \mathrm{Tr}_{j\notin M} |\psi_G\ra\la \psi_G|$ and a tensor product of single-qubit
states $\phi_j$ associated with qubits $j\in M$. This leads to the following problem.
\begin{problem}[\bf Surface Code Marginal]
\label{problem:SCM}
Given a planar graph $G=(V,E)$, a subset of edges $M\subseteq E$, and 
a single-qubit state $|\phi_j\ra \in \CC^2$ for every edge $j\in M$. 
Compute the overlap between the reduced density matrix
$\rho_M = \mathrm{Tr}_{j\notin M} |\psi_G\ra\la \psi_G|$ and 
the tensor product state $|\Phi\ra = \bigotimes_{j\in M} |\phi_j\ra$.
In other words, one has to compute the quantity  
\be
\label{overlap_mu}
\mu(G,M,\Phi) := \la \Phi|\rho_M|\Phi\ra 
\ee
\end{problem}

Let us say that a subset of edges $M\subseteq E$ is connected 
if the subgraph of $G$ that includes all edges in $M$ and their endpoints is connected. 
Ref.~\cite{bravyi2007measurement} showed that Problem~\ref{problem:SCM} can be solved in time 
$O(n^3)$ in the special case when both  subsets of edges $M$ and $E\setminus M$ are connected.
As a consequence, Ref.~\cite{bravyi2007measurement} 
showed that the qubit-by-qubit algorithm can efficiently simulate
a restricted class of MBQC such that 
the subsets of edges $\{1,2,\ldots,t\}$
and $\{t+1,t+2,\ldots,n\}$
are connected for all
$t$. In contrast, the connectivity constraint does not matter
for the gate-by-gate algorithm.

Here we complement the results of~\cite{bravyi2007measurement} by showing
that Problem~\ref{problem:SCM} can be $\#P$-hard if the connectivity constraint is removed. 
In other words, the problem of computing marginal probabilities of the state $\psi$ defined in Eq.~(\ref{surface_code_rotated}) is $\#P$-hard in the worst case.

This hardness result prevents one from
applying the qubit-by-qubit algorithm to simulate 
MBQC that do not obey the connectivity constraint.
Indeed, suppose the overlap $\mu(G,M,\Phi)$ is $\#P$-hard to compute
for some planar graph $G$, subset of edges $M\subseteq E$, and  single-qubit
states $\{\phi_i\}_{i\in M}$. 
Order the edges of $G$ such that $M=\{1,2,\ldots,t\}$,
where $t=|M|$. Choose the single-qubit unitaries in MBQC such that 
$|\phi_j\ra=U_j^\dag(0^{j-1}) |0\ra$ for all $j\in M$.
Then $\mu(G,M,\Phi)=\pi_t(0^t)$, that is, the hard-to-compute overlap
coincides with the marginal probability of measuring all-zeros
on the first $t$ qubits. 
The probability that the qubit-by-qubit algorithm samples 
bits $x_1=x_2=\ldots=x_{t-1}=0$ in the first $t-1$ iterations is $\pi_{t-1}(0^{t-1})\ge \pi_t(0^t)$.
Whenever this happens, the algorithm has to compute the 
$\#P$-hard quantity $\pi_t(0^t)$ at the $t$-th iteration.
Our hard instances of Problem~\ref{problem:SCM} 
always result in the positive overlap $\mu(G,M,\Phi)$.
Accordingly, the qubit-by-qubit algorithm
has to solve a $\#P$-hard  problem with a non-zero probability
(although this probability may be exponentially small). 
Here it is essential that MBQC
enforces a particular order in which qubits are measured. A regular (non-adaptive) measurement of the state $\psi$ defined in Eq.~(\ref{surface_code_rotated}) can be simulated efficiently using both gate-by-gate and qubit-by-qubit
algorithms. Indeed, assuming that the graph $G$ is connected, one can always measure qubits in the order that obeys the connectivity constraint and use
the algorithm of~\cite{bravyi2007measurement}.

In the rest of this section we prove $\#P$-hardness
of Problem~\ref{problem:SCM}.
Recall that a subset of edges $M\subseteq E$ is called a perfect matching if
every vertex of the graph has exactly one incident edge from $M$. 
Our starting point is the following hardness result. 
\begin{theorem}[\bf Dagum and Luby~\cite{dagum1992approximating}]
\label{thm:matching}
Exact counting of perfect matchings in a $3$-regular graph is $\#P$-hard.
\end{theorem}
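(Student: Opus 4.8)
The plan is to give a polynomial-time Turing reduction to the claimed problem from a counting problem already known to be $\#P$-hard. The natural source is Valiant's theorem that evaluating the permanent of a $0/1$ matrix is $\#P$-hard: the permanent of the biadjacency matrix of a bipartite graph counts its perfect matchings, and a bipartite graph is in particular an ordinary graph, so counting perfect matchings in arbitrary graphs is $\#P$-hard. Starting from an arbitrary graph $G$, I would construct in polynomial time a $3$-regular graph, or a family of them, from which the number of perfect matchings of $G$, written $\mathrm{pm}(G)$, can be recovered using an oracle that counts perfect matchings of $3$-regular graphs.

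First I would attempt the most direct strategy: replace every vertex $v$ of degree $d>3$ by a bounded-degree gadget that realizes the local constraint of a degree-$d$ vertex, namely that exactly one incident edge is matched, while contributing a fixed, efficiently computable number of internal completions. If such a gadget $H_v$ of maximum degree $3$ existed with a completion count independent of which external edge is selected, the construction would be parsimonious up to a global factor and one would simply divide it out. I expect this to be the main obstacle, and in fact to fail in its naive forms, because perfect matchings impose a rigid parity constraint: a gadget with an odd (resp. even) number of internal vertices can only support configurations in which an odd (resp. even) number of external edges are matched, since every internal vertex must be covered. A single degree-$d$ vertex already realizes the exact-one constraint, but any attempt to subdivide it into pieces of degree at most $3$ either flips the parity or introduces spurious support on configurations with three, five, \ldots\ external edges active, which correspond to matchings of the padded graph that do not descend to matchings of $G$. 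This is precisely why odd-cycle or two-vertex ``splitting'' gadgets do not preserve $\mathrm{pm}(G)$ by a constant factor.

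To circumvent the obstruction I would use polynomial interpolation, the technique underlying the Dagum--Luby result. The idea is to insert, at each high-degree vertex and wherever the degree must be raised to three, a gadget drawn from a family indexed by an integer parameter $r$ (for instance a chain of fixed local gadgets of length $r$, or a gadget whose number of internal completions is a controllable function of $r$), chosen so that every member $G_r$ of the resulting family is genuinely $3$-regular. One then argues that $\mathrm{pm}(G_r)$ is, as a function of $r$, a polynomial of degree bounded by a polynomial in the size of $G$, whose coefficients are fixed integer combinations in which $\mathrm{pm}(G)$ appears. Evaluating $\mathrm{pm}(G_r)$ on polynomially many values of $r$, each a single oracle call on a $3$-regular instance, and solving the resulting Vandermonde system recovers the coefficients and hence $\mathrm{pm}(G)$.

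The crux, and the step I expect to be hardest, is to design the gadget family so that three requirements hold simultaneously: each $G_r$ is exactly $3$-regular (in particular every leftover degree-$1$ and degree-$2$ vertex, including those created inside the gadgets, must itself be absorbed into a degree-$3$ structure); the dependence of $\mathrm{pm}(G_r)$ on $r$ is a low-degree polynomial with enough linearly independent evaluations to invert; and the parity bookkeeping is arranged so that the coefficient carrying $\mathrm{pm}(G)$ is nonzero and extractable rather than cancelled. Verifying these properties amounts to computing the local matching signatures of the gadgets and checking that they combine as claimed, a finite but delicate calculation; once it is in hand, the polynomial-time Turing reduction, and with it the $\#P$-hardness for $3$-regular graphs, follows immediately.
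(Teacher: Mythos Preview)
The paper does not prove this theorem at all: it is quoted as a known result of Dagum and Luby and used as a black box to seed the authors' own reduction (their Theorem on the Surface Code Marginal problem). There is nothing in the paper to compare your argument against.

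That said, your plan is broadly aligned with how the Dagum--Luby result is actually obtained: start from the $\#P$-hardness of the permanent (equivalently, perfect matchings in bipartite graphs), and use polynomial interpolation over a parametrized family of $3$-regular instances to extract $\mathrm{pm}(G)$. Your diagnosis of why a single parsimonious gadget fails (the parity obstruction forcing spurious support on odd numbers of active external edges) is correct and is exactly the reason interpolation, rather than a direct gadget replacement, is needed. Where your proposal remains a plan rather than a proof is the step you yourself flag as the crux: you assert the existence of a gadget family making each $G_r$ exactly $3$-regular with $\mathrm{pm}(G_r)$ a polynomial in $r$ of controlled degree, with the coefficient carrying $\mathrm{pm}(G)$ nonzero, but you do not exhibit one. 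That is the entire content of the Dagum--Luby construction, and without it nothing has been proved. If you want to complete the argument, you will need to write down concrete gadgets (for instance, the ``chain of diamonds'' or similar degree-raising/lowering widgets), compute their matching signatures explicitly, and verify that the resulting Vandermonde system is nonsingular.
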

Although this not necessary for our purposes, we note that 
Theorem~\ref{thm:matching}  holds even for a restricted family of $3$-regular graphs 
which are bipartite and 
whose edge set can be represented as a union of three edge-disjoint perfect matchings.
However, this hardness result requires non-planar graphs since
the number of perfect matchings in any planar  graph can be computed efficiently~\cite{kasteleyn1961statistics}.
Here we reduce the $\#P$-hard counting problem considered by Dagum and Luby
to the Surface Code Marginal problem which proves that the latter
is also $\#P$-hard.
\begin{theorem}
There is a polynomial time algorithm that takes as input 
a $3$-regular graph $G'$ and outputs an
instance $(G,M,\Phi)$ of the 
Surface Code Marginal problem 
and a real number $C$  such that the number of perfect matchings
in $G'$ coincides with $C\mu(G,M,\Phi)$.
The size of $G$ is at most polynomial in the size of $G'$.
\end{theorem}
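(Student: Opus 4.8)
The plan is to reduce exact counting of perfect matchings of a $3$-regular graph $G'=(V',E')$, which is $\#P$-hard by Theorem~\ref{thm:matching}, to computing the overlap $\mu(G,M,\Phi)$ for a suitably constructed \emph{planar} instance $(G,M,\Phi)$ of the Surface Code Marginal problem, together with an explicitly computable constant $C$ so that $\#\mathrm{PM}(G')=C\,\mu(G,M,\Phi)$.

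First I would put $\mu(G,M,\Phi)$ in a form suited to gadget analysis. Write $\overline M=E\setminus M$ and, for $b\in\{0,1\}^{\overline M}$, let $|\Psi_b\ra=\big(\bigotimes_{j\notin M}|b_j\ra\big)\otimes\big(\bigotimes_{j\in M}|\phi_j\ra\big)$; then
\be
\mu(G,M,\Phi)=\la\Phi|\rho_M|\Phi\ra=\sum_{b\in\{0,1\}^{\overline M}}|\la\Psi_b|\psi_G\ra|^2,
\ee
and by Eq.~(\ref{surface_code}) one has $\la\Psi_b|\psi_G\ra=|\calZ(G)|^{-1/2}\sum_{x\in\calZ(G):\,x_{\overline M}=b}\prod_{j\in M}\la\phi_j|x_j\ra$. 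Choosing each $|\phi_j\ra$ with $\la\phi_j|0\ra=1$ and $\la\phi_j|1\ra=w_j$ for parameters $w_j$ to be fixed later, $\mu$ becomes, up to the explicit factor $|\calZ(G)|^{-1}$, a sum over all boundary configurations $b$ of the squared modulus of a weighted count of cycles of $G$. The crucial structural observation is that each individual term $|\la\Psi_b|\psi_G\ra|^2$ is itself an instance of Surface Code Amplitude and hence efficiently computable (a planar Ising partition function, cf.\ \cite{bravyi2007measurement,barahona1982computational}); all of the potential hardness lives in the \emph{exponential sum over $b$} produced by the partial trace, equivalently in the coherent interference among cycles supported inside $M$ combined with the freedom to use complex weights $w_j$. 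The reduction must arrange this sum to count matchings of $G'$.

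Next I would build $(G,M,\Phi)$ from $G'$ by a gadget construction. Fix a drawing of $G'$ in the plane with $\mathrm{poly}(|V'|)$ edge crossings; replace each edge of $G'$ by a ``wire'' consisting of a chain of surface-code qubits that can be in one of two logical states (``active''/``inactive''), each vertex of $G'$ by a \emph{vertex gadget} — a small planar patch whose admissible cycle configurations force exactly one of its three incident wires to be active — and each crossing by a \emph{crossing gadget}, again a small planar patch, part of whose qubits lie in $\overline M$, whose net effect after summing over its traced-out bits is that of two logically independent wires passing through, contributing a factor independent of which of the two wires is active. Choosing the weights $w_j$ and the subset $M$ consistently across all gadgets, and absorbing $|\calZ(G)|^{-1}$, the per-edge weight normalizations, and the fixed gadget multiplicities into a single constant $C$, one expands $\mu(G,M,\Phi)$ using the formula above, factorizes the resulting cycle/boundary sum over the gadgets, and reads off $C\,\mu(G,M,\Phi)=\#\mathrm{PM}(G')$: surviving terms are in bijection with choices of an active subset of wires that is a perfect matching of $G'$, each contributing the same positive weight. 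Since $G$ has $\mathrm{poly}(|V'|)$ edges and faces and is produced explicitly (including its face list), the whole map runs in polynomial time.

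I expect the crossing gadget to be the main obstacle. Classically there is no planar ``matchgate crossover'' — this is exactly why perfect matchings are tractable on planar graphs but $\#P$-hard in general — so the gadget cannot be realized as a single planar Ising partition function; it must exploit the additional power of the marginal, namely the sum over exponentially many configurations of the traced-out qubits (equivalently, effectively mixed/complex weights arising from the coherent superposition over cycles inside $M$). Designing such a patch and verifying that after tracing it behaves exactly as a pair of non-interacting wires with the prescribed uniform factor is the technical heart of the argument; once that is in place, the vertex gadget and the wire bookkeeping are routine. Two further points require care: checking that $C$ is nonzero and efficiently computable (so that the reduction is genuine), and confirming the claim used in the main text that $\mu(G,M,\Phi)>0$ on the produced instances — both follow from the explicit gadget weights, since every perfect matching of $G'$ contributes a strictly positive term to the expanded sum.
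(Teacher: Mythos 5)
Your overall architecture matches the paper's: draw $G'$ in the plane with crossings, replace vertices and crossings by small planar gadgets, let the traced-out qubits supply an incoherent sum over boundary configurations, and encode edge weights in the single-qubit states $\phi_j$. However, there are two genuine gaps. First, your vertex gadget is supposed to ``force exactly one of its three incident wires to be active,'' but this is incompatible with the structure of $\psi_G$: the surface code state is supported on the cycle space, and any cycle of $G$ restricted to the boundary edges of a gadget patch necessarily activates an \emph{even} number of them (sum the degrees over the vertices inside the patch). A degree-one constraint at every vertex can therefore never survive in the cycle sum. The paper avoids this by counting $2$-factors instead of perfect matchings directly --- in a $3$-regular graph these are in bijection via complementation --- so the vertex weight function is $W(000)=0$, $W(110)=W(101)=W(011)=1$, which is an even-weight constraint. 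Your plan needs this complementation step (or an equivalent parity fix), and as written it would produce $\mu=0$ identically.

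Second, you explicitly defer the construction of the gadgets, calling the crossing gadget ``the technical heart'' --- but that heart is the proof. The paper's notion of an admissible weight function, $W(z)=\tau\left|\mathsf{Cycle}(\Theta,z)\right|^2$ for a complex-weighted planar gadget $\Theta$, together with explicit realizations (the $2$-factor gadget with weights $a=e^{i\pi/3}$, $b=3^{-1/4}$ exploiting the cancellation $1+a^3=0$, and the crossing gadget with $a=e^{i\pi/4}$, $b=e^{-i\pi/6}$, $c=-1/\sqrt2$ arranged so that $\mathsf{Cycle}(\Gamma,1100)=\mathsf{Cycle}(\Gamma,0110)=0$ while the allowed configurations have equal moduli), is exactly what makes the reduction go through; without exhibiting such gadgets the claim is unproven. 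Relatedly, your intuition about \emph{where} the extra power over planar tractability comes from is slightly misplaced: in the paper every gadget, including the crossing gadget, lies entirely in $M$ and is a single planar complex-weighted cycle sum; the traced-out qubits are precisely the external edges of $G''$, and the hardness comes from the per-gadget modulus-squared applied before the incoherent sum over $y\in\calZ(G'')$, i.e., from the combination of complex weights and the $|\cdot|^2$ structure of the marginal, not from hiding traced-out qubits inside the crossing gadget.
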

\begin{proof}
First let us introduce some notations.
Suppose $\Theta$ is a graph with a set of vertices $V(\Theta)$ and a set
of $n$ edges $E(\Theta)$. Edges of $\Theta$ are labeled by integers $1,2,\ldots,n$.
Subsets of edges $x\subseteq E(\Theta)$ are identified with $n$-bit strings.
We shall assume that each vertex
$u\in V(\Theta)$
is equipped with a linear order on the set of edges incident to $u$.
Given a bit string $x\in \{0,1\}^n$ (or, equivalently, a subset of edges of $\Theta$),
we shall write $\delta_u(x)$ for the restriction of $x$ onto the set of edges incident to $u$.
In other words, if the ordered set of edges incident to $u$ is $(j,k,\ldots,\ell)$
then $\delta_u(x)=(x_j,x_k,\ldots,x_\ell)$.
If $\Theta$ is a planar graph, the 
order of edges incident to any vertex $u$ must agree with the order in which the edges appear as one circumnavigates $u$ clockwise.

Let $G'=(V',E')$ be the input $3$-regular graph
and $\mathsf{PerfMatch}(G')$ be the number of perfect matchings in $G'$.
If $x\subseteq E'$ is a perfect matching then $y=E'\setminus x$
is a cycle in $G'$ such that each vertex of $G'$ has exactly two incident
edges from $y$ (such cycles are known as $2$-factors).
This shows that
\be
\label{PM1}
\mathsf{PerfMatch}(G') = \sum_{y\in \calZ(G')} \; \prod_{u\in V'} W(\delta_u(y)),
\ee
where $\calZ(G')$ is the cycle space of $G'$ and 
$W\, : \, \{0,1\}^3 \to \RR_+$ is a weight function 
such that
\be
\label{Wfactor}
W(000)=0 \quad \mbox{and} \quad 
W(110)=W(101)=W(011) =1.
\ee
The  function $W$ may take arbitrary values on odd-weight inputs
since  such  inputs  never appear in the sum Eq.~(\ref{PM1}).
Note that the order of edges incident to each vertex does not matter here since $W$ is a symmetric function.

Consider a planar drawing of the input graph $G'$ such that some pairs of edges
of $G'$ may cross. Whenever some pair of edges cross, one of them traverses the crossing
point above the plane and the other traverses below the plane.
Let $G''=(V'',E'')$ be a planar graph obtained by replacing each crossing point in this
planar drawing by a degree-$4$ vertex. 
Thus $V''=V'\cup X$, where $X$ is the set of crossing points.
We claim that
\be
\label{PM2}
\mathsf{PerfMatch}(G') = \sum_{y\in \calZ(G'')} \; \prod_{u\in V'} W(\delta_u(y))
\prod_{u\in X} \tilde{W}(\delta_u(y)),
\ee
where $\tilde{W}\, : \, \{0,1\}^4 \to \RR_+$ is a weight function such that 
\begin{align}
\tilde{W}(z)&=1 \quad \mbox{if} \quad z\in \{0000,1010,0101,1111\},\label{Wcross1}\\
\tilde{W}(z)&=0 \quad \mbox{if} \quad z\in \{1100,0011,1001,0110\}.\label{Wcross2}
\end{align}
Indeed, consider a vertex $u\in X$ and let $(e^1,e^2,e^3,e^4)$ be the
ordered set of edges of $G''$ incident to $u$. By our choice of the linear order,
$e^1,e^3$ originate from an edge of $G'$ that traverses $u$ above the plane
while $e^2,e^4$ originate from an edge of $G'$ that traverses $u$ below the plane
(or vice verse). Thus $\tilde{W}(\delta_u(y))=1$ iff $y_{e^1}=y_{e^3}$ and
$y_{e^2}=y_{e^4}$. In other words, $\tilde{W}(\delta_u(y))=1$ for all $u\in X$
iff $y$ represents some subset of edges in the original graph $G'$.
Combining these observations and Eq.~(\ref{PM1}) one arrives at Eq.~(\ref{PM2}).

The desired planar graph $G$ is obtained from $G''$ by replacing each
vertex $u\in V''$ with a suitable {\em gadget} --
a planar graph $G_u$ such that edges of $G''$
incident to $u$ are identified with dangling edges of $G_u$ lying on its outer face. 
To describe this formally we need some more notations.
Let $\Theta$ be a weighted planar graph with a set of $n$ edges $E(\Theta)$
labeled by integers $1,2,\ldots,n$ and a weight function 
$f\, : \, E(\Theta) \to \CC$.
Write $E(\Theta)=D(\Theta) \cup I(\Theta)$, where
$D(\Theta)$ includes all dangling edges 
(that is, edges that have only one endpoint)
and $I(\Theta)$ includes all
internal edges (that is, edges that have two endpoints). 
The dangling edges are allowed to appear only on the outer face of $\Theta$.
Below we assume that $f(j)=1$ for all $j\in D(\Theta)$ so that non-trivial weights can be assigned only to internal edges.
Also we assume that the set $D(\Theta)$ is equipped with a linear order which agrees with the order in which
the dangling edges  appear as one circumnavigates the outer face of $\Theta$ clockwise.
Given a cycle $x\in \calZ(\Theta)$,
let $\Delta(x)$ be the restriction of $x$ onto the set
$D(\Theta)$. In other words, if the ordered set of dangling edges is $D(\Theta)=(j,k,\ldots,\ell)$
then $\Delta(x)=(x_j,x_k,\ldots,x_\ell)$.
Given a bit string $z$ of length $|D(\Theta)|$, 
define a weighted cycle sum
\be
\label{weighted_sum}
\mathsf{Cycle}(\Theta,z)=\sum_{\substack{x\in \calZ(\Theta)\\  \Delta(x)=z\\}} \; \prod_{j\in x} f(j).
\ee
Each term in the sum is associated with a cycle $x$ in the graph $\Theta$
such that the restriction of $x$
onto the dangling edges $D(\Theta)$ coincides with $z$.
\begin{dfn}
A function $W\, : \, \{0,1\}^k \to \RR_+$ is called admissible 
if there exists a weighted planar graph $\Theta$ with $k$ dangling edges and a normalizing coefficient $\tau>0$ such that 
\be
\label{admissible}
W(z)=\tau \left| \mathsf{Cycle}(\Theta,z)\right|^2
\ee
for all even-weight bit strings $z\in \{0,1\}^k$.
The graph $\Theta$ is called a gadget realizing $W$. 
\end{dfn}
Below we prove the following.
\begin{lemma}[\bf $2$-factor gadget]
\label{lemma:2factor}
The weight function $W$ defined in Eq.~(\ref{Wfactor}) is admissible.
\end{lemma}
\begin{lemma}[\bf Crossing gadget]
\label{lemma:crossing}
The weight function $\tilde{W}$ defined in Eqs.~(\ref{Wcross1},\ref{Wcross2}) is admissible.
\end{lemma}
Let $\Theta$ and $\Gamma$ be the gadgets realizing the weight functions
$W$ and $\tilde{W}$ respectively. 
For each vertex $u\in V''$ define a gadget graph 
\be
G_u = \left\{ \ba{rcl}
\Theta &\mbox{if}& u\in V',\\
\Gamma &\mbox{if}& u\in X.\\
\ea
\right.
\ee
Expressing the weight functions $W(z)$ and $\tilde{W}(z)$
in Eq.~(\ref{PM2}) in terms of the corresponding gadgets
 one gets
\be
\label{PM3}
\mathsf{PerfMatch}(G') =\sigma \sum_{y\in \calZ(G'')} \; 
\prod_{u\in V''} \left|  \mathsf{Cycle}(G_u, \delta_u(y)) \right|^2,
\ee
where $\sigma$ is a product of all normalizing coefficients $\tau$
introduced by the gadgets. 
Let $G=(V,E)$ be a planar graph obtained from $G''$ by
replacing each vertex $u\in V''$ with the gadget graph $G_u$
such that the dangling edges $D(G_u)$ are identified with the 
edges of $G''$ incident to $u$.
We create a new copy of $G_u$ for each $u\in V''$.
We say that an edge $j\in E$  is {\em internal} if it is an internal edge
of some gadget graph $G_u$. 
Otherwise we say that $j$ is an {\em external edge}.
By construction, any external edge $j=(u,v)$
is obtained by identifying a dangling edge of $G_u$
and a dangling edge of $G_v$ for some $(u,v)\in E''$.
Thus the set of external edges of $G$ can be identified with $E''$.
Let
\be
M = \bigcup_{u\in V''} I(G_u)
\ee
be the set of internal edges of $G$.
For every internal edge $j\in I(G_u)$ define a single-qubit (unnormalized) state 
\[
|\phi_j\ra = |0\ra + f_u(j)|1\ra,
\]
where $f_u\, : \, E(G_u)\to \CC$ is the weight function associated with the gadget graph $G_u$.
At this point, we defined a single-qubit state $\phi_j$ for every internal edge $j\in M$.
Let $|\Phi\ra=\bigotimes_{j\in M} |\phi_j\ra$.
We claim that 
\be
\label{PM4}
\mathsf{PerfMatch}(G') = \sigma\cdot |\calZ(G)|\cdot \la \Phi|\rho_M|\Phi\ra,
\ee
where $\rho_M = \mathrm{Tr}_{j\notin M} |\psi_G\ra\la \psi_G|$.
Indeed, it follows directly from the above definitions
that 
\be
\label{cycle_eq1}
 \mathsf{Cycle}(G_u,z) = \la \bigotimes_{j \in I(G_u)} \phi_j|\psi(u,z)\ra
\ee
where $\psi(u,z)$ is a state of $|I(G_u)|$ qubits defined as 
\be
|\psi(u,z)\ra = \sum_{\substack{x\in \calZ(G_u)\\ \Delta_u(x)=z}} |x\cap I(G_u)\ra.
\ee
Here $\Delta_u(x)$ is the restriction of $x$ onto $D(G_u)$.
Furthermore, the surface code state on the graph $G$ can be written as
\be
\label{cycle_eq2}
|\psi_G\ra = \frac1{\sqrt{|\calZ(G)|}} \sum_{y \in \calZ(G'')}
|y\ra_{E''} \bigotimes_{u\in V''} 
|\psi(u,\delta_u(y))\ra_{I(G_u)}
\ee
Here the subscript of a quantum state indicates the subset of qubits that supports this state.
From Eq.~(\ref{cycle_eq2}) one gets
\be
\label{cycle_eq3}
\rho_M = \frac1{|\calZ(G)|}
 \sum_{y \in \calZ(G'')}
 \bigotimes_{u\in V''} 
 |\psi(u,\delta_u(y))\ra\la \psi(u,\delta_u(y))|_{I(G_u)}.
\ee
Combining Eqs.~(\ref{PM3},\ref{cycle_eq1},\ref{cycle_eq3}) proves Eq.~(\ref{PM4}).
If necessary, the single-qubit states $\phi_j$ 
in Eq.~(\ref{PM4}) can be normalized by properly updaing
the coefficient $\sigma$. 
This gives the desired expression 
$\mathsf{PerfMatch}(G')=C\cdot \la \Phi|\rho_M|\Phi\ra$
with $C=\sigma\cdot |\calZ(G)|$.

It remains to prove Lemmas~\ref{lemma:2factor},\ref{lemma:crossing}.

\begin{proof}[\bf Proof of Lemma~\ref{lemma:2factor}]
Consider a weighted planar graph $\Theta$ shown on Figure~\ref{fig:2factor}.
The edge weights of $\Theta$ take values $a=e^{i\pi/3}$ and $b=3^{-1/4}$,
as shown on Figure~\ref{fig:2factor}.
Recall that all dangling edges have weight $1$.
A simple calculation gives
\be
\mathrm{Cycle}(\Theta,000)=1+a^3=0,
\ee
\be
\mathrm{Cycle}(\Theta,011)=\mathrm{Cycle}(\Theta,101)=\mathrm{Cycle}(\Theta,110)=b^2(a+a^2).
\ee
Since $|b^2(a+a^2)|=1$, 
one gets $W(z) = \left| \mathrm{Cycle}(\Theta,z)\right|^2$ for all even-weight strings $z$. 
\begin{figure}[ht]
\centerline{\includegraphics[width=5cm]{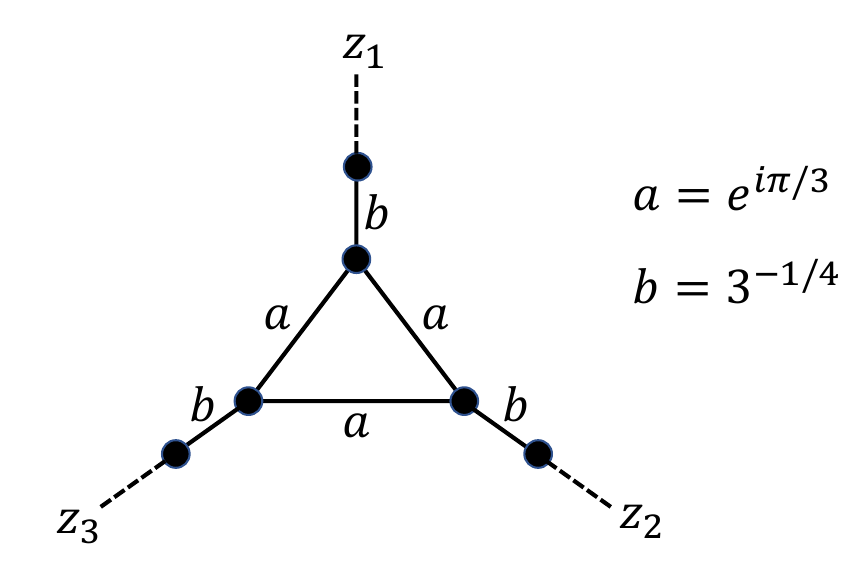}}
\caption{Gadget $\Theta$ implementing a function 
$W(z_1z_2z_3)$ such that 
$W(000)=0$ and $W(011)=W(101)=W(110)=1$.
Dangling and internal edges are shown by dashed and solid lines respectively. Constants $a$ and $b$ are the edge weights of $\Theta$. All dangling edges have weight $1$.
}
\label{fig:2factor}
\end{figure}
\end{proof}
\begin{figure}[ht]
\centerline{\includegraphics[width=8cm]{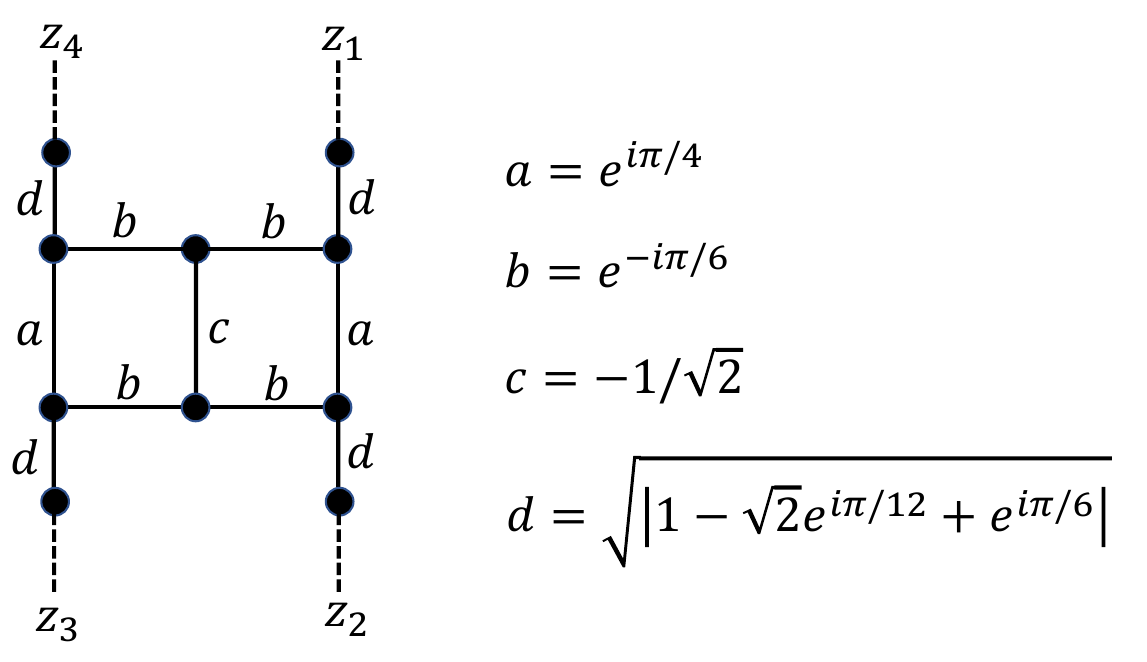}}
\caption{Gadget $\Gamma$ implementing a function $\tilde{W}(z)$
defined in Eqs.~(\ref{Wcross1},\ref{Wcross2}).
Recall that $\tilde{W}(z)=1$ if $z_1=z_3$ and $z_2=z_4$.
Otherwise $\tilde{W}(z)=0$. The function is undefined for
for odd-weight strings $z$.
}
\label{fig:crossing}
\end{figure}
\begin{proof}[\bf Proof of Lemma~\ref{lemma:crossing}]
Consider a weighted planar graph $\Gamma$ shown on Figure~\ref{fig:crossing}.
A simple calculation gives
\begin{align*}
\mathrm{Cycle}(\Gamma,0000)&=1+2ab^2c + a^2 b^4,\\
\mathrm{Cycle}(\Gamma,1100)&=d^2(a + b^2 c + a b^4 + a^2 b^2 c),\\
\mathrm{Cycle}(\Gamma,1010)&=d^2(b^2 c + 2ab^2 + a^2 b^2 c),\\
\mathrm{Cycle}(\Gamma,0110)&=d^2(b^2 + a^2 b^2  + 2ab^2c),\\
\mathrm{Cycle}(\Gamma,1111)&=d^4(a^2 + b^4 + 2a b^2 c).\\
\end{align*}
The symmetries of the gadget imply that 
\[
\mathrm{Cycle}(\Gamma,z_1z_2z_3 z_4) = \mathrm{Cycle}(\Gamma,z_4z_3z_2 z_1)=
\mathrm{Cycle}(\Gamma,z_2z_1z_4 z_3).
\]
By definition, implementing the 
desired function $\tilde{W}(z)$  with a normalization $\tau$ requires
$\tilde{W}(z) = \tau \left| \mathrm{Cycle}(\Gamma,z)\right|^2$ for all even-weight strings $z$.
This is equivalent to
\be
\label{condition1}
|\mathrm{Cycle}(\Gamma,0000)|^2 = |\mathrm{Cycle}(\Gamma,1010)|^2 = |\mathrm{Cycle}(\Gamma,1111)|^2 = \frac1{\tau}
\ee
and
\be
\label{condition2}
\mathrm{Cycle}(\Gamma,1100)=\mathrm{Cycle}(\Gamma,0110)=0.
\ee
One can easily check that these conditions are satisfied for 
\be
a = e^{i\pi/4}, \quad b=e^{-i\pi/6}, \quad c= -1/\sqrt{2}, \quad
d =\sqrt{\left| 1 - \sqrt{2} e^{i\pi/12} + e^{i\pi/6} \right|}
\ee
and
\be
\tau = \frac1{|1+2ab^2c + a^2 b^4|^2} \approx 3.732.
\ee
\end{proof}
\end{proof}

\section{Ground state sensitivity for stoquastic Hamiltonians}

Suppose $H$ is stoquastic, that is, $\la y|H|x\ra\le 0$ for all $x\ne y$. Let $E_0$ be the ground energy of $H$. Suppose $\psi$ is a ground state such that $H|\psi\ra = E_0|\psi\ra$. By Perron–Frobenius theorem, we can assume wlog that $\psi$ has real non-negative amplitudes, i.e. $\la x|\psi\ra\ge 0$ for all $x$.
Consider any string $y$ such that $\la y|\psi\ra>0$.
From $\la y|H|\psi\ra=E_0\la y|\psi\ra$ one gets
\[
(\la y|H|y\ra - E_0)\la y|\psi\ra = \sum_{x\ne y}
|\la y|H|x\ra \la x|\psi\ra |\ge |\la y|H|x\ra \la x|\psi\ra |
\]
for any $x\ne y$. Thus
\[
\frac{|\la y|H|x\ra \la x|\psi\ra|}{|\la y|\psi\ra|}\le \la y|H|y\ra-E_0.
\]
By definition of the sensitivity parameter Eq.~(\ref{sensitivity}),
this implies $s\le \max_y \la y|H|y\ra -E_0$.

\section{Magic ratio Hamiltonians}
We consider a special family of Hamiltonians which can be viewed as a generalization of stoquastic frustration-free Hamiltonians. Let us consider an $n$-qubit Hamiltonian
$$H=-\sum_{a=1}^m P_a$$
where each $P_a$ has the form
$$P_a=\sum_{j}\ket{\phi_{a,j}}\bra{\phi_{a,j}}$$
for some rank-$1$ projectors $\ket{\phi_{a,j}}\bra{\phi_{a,j}}$. For every $n$-qubit state $\ket{\phi}$, define the support of $\ket{\phi}$ as
$$\text{Supp}(\ket{\phi})=\{x\in\{0,1\}^n:\braket{x}{\phi}\neq 0\}.$$
We require an additional key property that for every $a$, the $\ket{\phi_{a,j}}$'s are supported on mutually disjoint (w.r.t $j$) subsets of the computational basis vectors. More formally, we assume that for every $a$,
$$\text{Supp}(\ket{\phi_{a,j}})\cap\text{Supp}(\ket{\phi_{a,j'}})=\emptyset$$
for every $j\neq j'$. We further assume that $H$ is frustration-free, that is, $\psi$ is a ground state of $H$ iff $P_a|\psi\ra=|\psi\ra$
for all $a=1,2,\ldots,m$.
This family of Hamiltonians satisfies the following very special properties.
\begin{lemma}[The magic ratio property]
\label{lemma:magic_ratio}
For every $x,y\in\{0,1\}^n$ and every ground state $\ket{\psi}$ of $H$, if $\bra{x}H\ket{y}\neq 0$, then
\begin{enumerate}[label=(\roman*)]
\item  either $\braket{x}{\psi}=0$ and $\braket{y}{\psi}=0$ or $\braket{x}{\psi}\neq 0$ and $\braket{y}{\psi}\neq 0$;
\item if $\braket{x}{\psi}\neq 0$, then there exists an $a$ and a unique $j$ w.r.t $a$ such that $\frac{\braket{y}{\psi}}{\braket{x}{\psi}}=\frac{\braket{y}{\phi_{a,j}}}{\braket{x}{\phi_{a,j}}}$.
\end{enumerate}
\end{lemma}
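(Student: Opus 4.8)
The plan is to use the disjoint-support hypothesis to show that each $P_a$ is an orthogonal projector, so that a ground state $\ket{\psi}$ decomposes \emph{blockwise} with respect to the supports of the $\ket{\phi_{a,j}}$, and then to observe that the ratio $\braket{y}{\psi}/\braket{x}{\psi}$ only depends on the single block containing both $x$ and $y$. First I would note that for fixed $a$ the states $\ket{\phi_{a,j}}$ have pairwise disjoint supports in the computational basis, hence are orthonormal, so $P_a=\sum_j\ket{\phi_{a,j}}\bra{\phi_{a,j}}$ is the orthogonal projector onto their span and the ground energy equals $-m$. Frustration-freeness then gives $P_a\ket{\psi}=\ket{\psi}$ for every $a$, i.e.
\[
\ket{\psi}=\sum_j \braket{\phi_{a,j}}{\psi}\,\ket{\phi_{a,j}} \qquad\text{for every } a .
\]
Taking the overlap with $\bra{z}$ yields $\braket{z}{\psi}=\sum_j\braket{\phi_{a,j}}{\psi}\braket{z}{\phi_{a,j}}$, and by disjointness of the supports \emph{at most one} term on the right is nonzero; in particular, if $z\in\text{Supp}(\ket{\phi_{a,j}})$ then $\braket{z}{\psi}=\braket{\phi_{a,j}}{\psi}\,\braket{z}{\phi_{a,j}}$.

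Next I would locate the relevant block. Since $\bra{x}H\ket{y}=-\sum_{a,j}\braket{x}{\phi_{a,j}}\overline{\braket{y}{\phi_{a,j}}}\neq 0$, at least one summand is nonzero, so there is a pair $(a,j)$ with $\braket{x}{\phi_{a,j}}\neq 0$ and $\braket{y}{\phi_{a,j}}\neq 0$, i.e. both $x$ and $y$ lie in $\text{Supp}(\ket{\phi_{a,j}})$. Applying the blockwise identity above with $z=x$ and then $z=y$ gives
\[
\braket{x}{\psi}=\braket{\phi_{a,j}}{\psi}\,\braket{x}{\phi_{a,j}}, \qquad
\braket{y}{\psi}=\braket{\phi_{a,j}}{\psi}\,\braket{y}{\phi_{a,j}},
\]
where $\braket{x}{\phi_{a,j}}$ and $\braket{y}{\phi_{a,j}}$ are both nonzero.

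From these two equations both parts follow at once. For (i): $\braket{x}{\psi}\neq 0\iff\braket{\phi_{a,j}}{\psi}\neq 0\iff\braket{y}{\psi}\neq 0$, which is exactly the stated dichotomy. For (ii): assuming $\braket{x}{\psi}\neq 0$ forces $\braket{\phi_{a,j}}{\psi}\neq 0$, so dividing the two displayed equations gives $\braket{y}{\psi}/\braket{x}{\psi}=\braket{y}{\phi_{a,j}}/\braket{x}{\phi_{a,j}}$; moreover, for this $a$ the index $j$ is the unique one with $\braket{x}{\phi_{a,j}}\neq 0$ (by disjointness), which is also the unique one for which the ratio identity is even meaningful. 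The argument is essentially mechanical once the projector/blockwise structure is in place. The only point needing a little care is that $\bra{x}H\ket{y}\neq 0$ may not single out a unique $(a,j)$ because of possible cancellation among the summands, but since the lemma only asserts the \emph{existence} of such an $(a,j)$ this causes no difficulty, and I do not anticipate a genuine obstacle.
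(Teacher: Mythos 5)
Your proposal is correct and follows essentially the same route as the paper's proof: both use frustration-freeness to get $P_a\ket{\psi}=\ket{\psi}$, use the disjoint-support assumption to isolate the unique $j$ (for a suitable $a$) with $\braket{x}{\phi_{a,j}}\neq 0$ and $\braket{y}{\phi_{a,j}}\neq 0$, and then read off both claims from $\braket{x}{\psi}=\braket{\phi_{a,j}}{\psi}\braket{x}{\phi_{a,j}}$ and $\braket{y}{\psi}=\braket{\phi_{a,j}}{\psi}\braket{y}{\phi_{a,j}}$. Your side remarks (that $P_a$ is an orthogonal projector, and that cancellation prevents uniqueness of the pair $(a,j)$ but existence suffices) are accurate and consistent with what the paper does implicitly.
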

\begin{proof}
Let $\ket{\psi}$ be a ground state of $H$, so $P_a\ket{\psi}=\ket{\psi}$ for every $a$ since $H$ is assumed to be frustration-free. Let $x,y\in\{0,1\}^n$ such that $\bra{x}H\ket{y}\neq 0\implies \exists a$ such that $\bra{x}P_a\ket{y}\neq 0\implies \exists$ unique $j$ such that $\bra{x}P_a\ket{y}=\braket{x}{\phi_{a,j}}\braket{\phi_{a,j}}{y}$ with $\braket{x}{\phi_{a,j}}\neq 0$ and $\braket{y}{\phi_{a,j}}\neq 0$. We have that $\braket{y}{\psi}=\bra{y}P_a\ket{\psi}=\braket{y}{\phi_{a,j}}\braket{\phi_{a,j}}{\psi}$ and $\braket{x}{\psi}=\bra{x}P_a\ket{\psi}=\braket{x}{\phi_{a,j}}\braket{\phi_{a,j}}{\psi}$. Thus, $\braket{x}{\psi}=0\iff\braket{\phi_{a,j}}{\psi}=0\iff\braket{y}{\psi}=0$. Furthermore, if $\braket{x}{\psi}\neq 0$, it is clear that $\frac{\braket{y}{\psi}}{\braket{x}{\psi}}=\frac{\braket{y}{\phi_{a,j}}}{\braket{x}{\phi_{a,j}}}$.
\end{proof}
Because of \autoref{lemma:magic_ratio}, we will call $H$ a magic ratio Hamiltonian. Suppose $H$ has a unique ground state $\ket{\psi}$, then we can apply our ground state sampling algorithm to the distribution $\pi(x)=\lvert\braket{x}{\psi}\rvert^2$, $x\in\{0,1\}^n$. \autoref{lemma:magic_ratio} reduces the step of computing the ratio $\frac{\pi(y)}{\pi(x)}$ to locating an $a$ and its uniquely associated $j$ such that $\braket{x}{\phi_{a,j}}\neq 0$ and $\braket{y}{\phi_{a,j}}\neq 0$. This search can be done in $\text{poly}(n,2^k)$ time if each $P_a$ is $k$-local and is specified by a $2^k\times 2^k$ matrix. We remark that stoquastic frustration-free Hamiltonians satisfy properties akin to \autoref{lemma:magic_ratio}, but the key difference is that they possess an additional sign-problem-free characteristic such that their ground states $\ket{\psi'}$ can be chosen so that $\braket{x}{\psi'}\geq 0$ for every $x\in\{0,1\}^n$ \cite{bravyiterhal}. Hence, it is in this regard that $H$ is more general than a stoquastic frustration-free Hamiltonian.

Lemma~\ref{lemma:magic_ratio} provides an upper bound on the sensitivity parameter $s$ defined in Eq.~(\ref{sensitivity})
for any ground state $\psi$ of a frustration-free magic ratio Hamiltonian. Indeed, consider any strings $x\ne y$ such that 
$\la y|H|x\ra\ne 0$ and $\la y|\psi\ra\ne 0$.
Let $M=\{a\, : \, \la y|P_a|x\ra\ne 0\}$.
The proof of Lemma~\ref{lemma:magic_ratio} implies that
for each $a\in M$ one has
\[
\frac{|\la y|P_a |x\ra \la x|\psi\ra|}{|\la y|\psi\ra|}
= |\la x|\phi_{a,j(a)}\ra|^2\le 1.
\]
for some $j(a)$. Thus
\[
s = \frac{|\la y|H|x\ra\la x|\psi\ra |}{|\la y|\psi\ra|}
\le  \sum_{a\in M}\frac{|\la y|P_a|x\ra\la x|\psi\ra |}{|\la y|\psi\ra|} \le |M|\le m.
\]

\end{document}